\newtheorem{theorem}{Theorem}
\newtheorem{Lemma}{Lemma}
\newtheorem{lemma}[Lemma]{$\mathbf{Lemma}$}
\newcounter{problem}
\newcounter{save@equation}
\newcounter{save@problem}
\begin{document}
\title{ \LARGE{ Harvesting Devices' Heterogeneous Energy Profiles and QoS Requirements in  IoT: \\WPT-NOMA vs BAC-NOMA  }}

\author{ Zhiguo Ding, \IEEEmembership{Fellow, IEEE}    \thanks{ 
  
\vspace{-2em}

      Z. Ding
 is    with the School of
Electrical and Electronic Engineering, the University of Manchester, Manchester, UK (email: \href{mailto:zhiguo.ding@manchester.ac.uk}{zhiguo.ding@manchester.ac.uk}).

  }\vspace{-3em}}
 \maketitle
 
\begin{abstract}  
The next generation Internet of Things (IoT) exhibits a unique feature that IoT devices have different energy profiles and quality of service (QoS) requirements. 
 In this paper, two energy and spectrally efficient transmission strategies, namely   wireless power transfer assisted non-orthogonal multiple access (WPT-NOMA) and backscatter communication assisted  NOMA (BAC-NOMA), are proposed by utilizing this feature of IoT and employing spectrum and  energy cooperation among the  devices. Furthermore, for the proposed WPT-NOMA scheme, the application of hybrid successive interference cancelation (SIC) is also considered, and     analytical results  are developed to demonstrate that   WPT-NOMA  can avoid outage probability error floors and realize the full diversity gain. Unlike WPT-NOMA, BAC-NOMA suffers from an outage probability error floor, and the asymptotic behaviour of this   error floor   is analyzed in the paper by applying the extreme value theory. In addition, the effect of a unique feature of BAC-NOMA, i.e., employing  one device's signal as the carrier signal for   another device, is studied, and its   impact on the diversity gain  is revealed. Simulation results are also provided to compare the performance of the proposed   strategies  and verify the developed analytical results. 
\end{abstract} \vspace{-1em}

\section{Introduction} 
The next generation   Internet of Things (IoT)   is envisioned to support various important applications, including smart home, intelligent transportation,  wireless health-care, environment monitoring, etc \cite{ngiot}. The key   step to implement   the IoT is to ensure that a massive number of  IoT devices with heterogenous  energy profiles and quality of service (QoS) requirements  can be  connected  in a spectrally   efficient manner, which results in the two following challenges. From the spectral efficiency perspective, it is challenging to support massive connectivity, given the scarce bandwidth resources available for wireless communications. Non-orthogonal multiple access (NOMA) has been recognized as a spectrally efficient solution to support massive connectivity by encouraging spectrum sharing among wireless devices with different QoS requirements \cite{jsacnomaxmine,6666156,NOMAPIMRC}. For example, in conventional orthogonal multiple access (OMA), a delay-sensitive IoT device is allowed to solely  occupy a bandwidth resource block, which is not helpful to support massive connectivity and can also result in low spectral efficiency, particularly if this device has a small amount of data to send. By using NOMA, additional users, such as delay-tolerate  devices, can   be admitted to the channel. As a result, the overall spectral efficiency is improved,  and the use of advanced forms of NOMA can    ensure   that   massive connectivity  is supported while strictly guaranteeing  all devices'    QoS requirements  
\cite{8662677,8986647,8674774}. 

From the energy   perspective, the challenge is due to the fact that some IoT devices might be      equipped with continuous power supplies, but there are   many other devices  which  are battery powered and hence severely energy constrained. This challenge   motivates the use of two techniques, wireless power transfer (WPT) and backscatter communication (BackCom). The key idea of WPT  is to use radio frequency (RF) signals for energy transfer. In particular,  an energy-constrained IoT device can first carry out energy harvesting by using  the RF signals sent by a power station or another non-energy-constrained node in the wireless network, where the harvested energy can be used to power the transmission of the energy-constrained device \cite{Ruizhangbroadcast13,6951347,7081080,8758981}.   Similar to WPT,  BackCom is another low-power and low-complexity technique to connect   energy-constrained devices  \cite{7876867,7551180,xiangback}.   The key idea of BackCom is to ask an energy-constrained IoT device, termed a tag, to carry out  passive reflection and modulation of  a single-tone sinusoidal continuous wave sent by a BackCom reader. Instead of relying on the continuous wave sent by a reader, a variation of BackCom, termed symbiotic radio, was recently proposed to use the information-bearing signal sent by a non-energy-constrained  device   to power a batteryless device \cite{8907447,8399824}. 

In order to simultaneously address the aforementioned spectral and energy challenges, it is natural to consider the combination of NOMA with the two energy-cooperation transmission techniques in the next generation IoT, which will be the focus of this paper. Early examples of WPT assisted NOMA (WPT-NOMA) have considered the cooperative communication scenario, where relay transmission is powered by the energy harvested from the signals sent by a source \cite{yuanweijsac, 9007658,7917312}. In downlink scenarios, the use of WPT-NOMA can yield a significant improvement in the spectral and energy efficiency as demonstrated by   \cite{9104736,9091839,8891923}.  The application of WPT to uplink NOMA has been previously studied in   \cite{7582543}, where   users use the energy harvested from the signal sent by the base station to power  their uplink NOMA transmission. Compared to WPT-NOMA, the application of BackCom to NOMA received less attention. In \cite{8439079} and \cite{wongwcnc20}, NOMA was used to ensure that multiple backscatter devices can communicate with the same access point (a reader) simultaneously by modulating the  continuous wave sent by the access point. More recently, the application of NOMA to a special case of BackCom, symbiotic radio, has been considered in \cite{8962090,8636518}. 

The aim of this paper is to  consider a NOMA uplink   scenario, where a delay-sensitive non-energy-constrained  IoT device  and multiple delay-tolerant  energy-constrained  devices communicate with the same access point. In particular, following the semi-grant-free protocol proposed in \cite{8662677} and \cite{SGFx},  one of the delay-tolerant devices is granted access to the channel which would be solely occupied by the delay-sensitive device in OMA. Because some IoT devices are energy constrained, the use of the two energy-cooperative transmission strategies, WPT-NOMA and BAC-NOMA, is considered,   and their performance is compared. The contributions of the paper are listed as follows:

\begin{itemize}
\item A new WPT-NOMA scheme is proposed by applying hybrid successive interference cancellation (SIC), where the transmission of an energy-constrained device is powered by the energy harvested from the signals sent by the non-energy-constrained device. Recall that hybrid SIC  is to dynamically decide  the SIC decoding order by simultaneously using the devices' channel state information (CSI) and their QoS requirements    \cite{SGFx}. An intermediate benefit for using hybrid SIC for NOMA uplink is to avoid an outage probability error floor, which is not possible if a fixed SIC decoding order is used. In this paper, the outage performance of   WPT-NOMA with hybrid SIC is analyzed, and the obtained analytical results demonstrate that   outage probability error floors can be   avoided and the full diversity gain is still achievable, even though the transmission of those energy-constrained   devices are not powered by their own batteries. 

\item A general multi-user BAC-NOMA scheme is proposed, where an energy-constrained device reflects and modulates the signals sent by the non-energy-constrained device. Note that   the BAC-NOMA scheme considered in \cite{8636518} can be viewed as a special case of this general framework. In addition, the two key features of BAC-NOMA are analyzed in detail. Firstly, we focus on the outage probability error floor suffered by BAC-NOMA.  The key event which causes the error floor is analyzed, and the asymptotic behaviour of the probability of this event with respect to the number of the participating devices is studied by applying the extreme value theory (EVT) \cite{Arnoldbook,1710338}. Secondly, we focus on another feature of BAC-NOMA, i.e., modulating the energy-constrained  device's signal on the non-energy-constrained device's signal. This feature means that the relationship between the two devices' signals is multiplicative, instead of additive. Or in other words,  the non-energy-constrained  device's signal can     be viewed as a type of fast fading for   the energy-constrained device.   The analytical results developed in the paper show that this virtual fading is damaging to the reception reliability, and the diversity gain achieved by BAC-NOMA is capped by one, even if the event which causes the outage probability error floor can be ignored. 

\item The performance achieved by the two energy and spectrally efficient  transmission strategies is compared by using the provided analytical and simulation results. Our finding is that WPT-NOMA can offer a significant outage performance gain over BAC-NOMA, particularly at high signal-to-noise ratio (SNR) and with small target data rates, which is due to the fact that   hybrid SIC can be implemented in WPT-NOMA systems. However, WPT-NOMA suffers the two following drawbacks. One is that WPT-NOMA cannot support continuous transmission, which has a harmful impact on its ergodic data rate. The other is that WPT-NOMA is sensitive to how much time is allocated for energy harvesting and data transmission, respectively, where an inappropriate choice can lead to a significant performance loss, compared to BAC-NOMA.  
\end{itemize}

\section{System Model}\label{section system model}
Consider a  NOMA uplink scenario with one access point and $(M+1)$ IoT devices, denoted by  ${\rm U_m}$, $0\leq m\leq M$. For illustration purposes, assume that ${\rm U_0}$ is a non-energy-constrained delay-sensitive device, whereas    ${\rm U_m}$, $1\leq m\leq M$, are energy constrained and delay tolerant.   The channel from the access point to ${\rm U_i}$ is denoted by   $h_i$, $0\leq i \leq M$. The channel from ${\rm U_0}$ to ${\rm U_m}$ is denoted by $g_m$, $1\leq m\leq M$. 

Because  ${\rm U}_0$ is delay sensitive, it is allowed to solely occupy a bandwidth resource block in OMA, which is   spectrally inefficient for supporting massive connectivity.  Following the designs shown in \cite{8662677} and \cite{SGFx}, we consider that one of the delay-tolerant IoT   devices   is to be granted access to the resource block which would be solely occupied by ${\rm U}_0$ in OMA. 

{\it Assumption:} To facilitate performance analysis, we assume that the energy-constrained   devices are located in a small-size cluster, such that that the distances between ${\rm U}_0$ and ${\rm U}_m$, $m\geq 1$,  are same. A similar assumption is also made to  the distances between the access point and the devices. For example, the devices can be  sensors in a self-driving  vehicle or on an autonomous  robot. For smart home applications, the devices can be sensors for different functionalities  fixed in the same room.  Therefore, we assume that $g_m$, $1\leq m \leq M$, are  modelled as independent and identically  distributed (i.i.d.) Rayleigh fading, i.e., complex Gaussian distributed with zero mean and   variance $\lambda_g$, $g_m\sim CN(0,\lambda_g)$, where $\lambda_g\triangleq d_g^\phi$, $d_g$ denotes the distance between   ${\rm U}_0$ and ${\rm U}_m$, $m\geq 1$, and $\phi$ denotes the path loss exponent.  
Similarly, we also assume that $h_m\sim CN(0,\lambda_h)$ and $h_0\sim CN(0,\lambda_0)$, where $\lambda_h\triangleq d_h^\phi$,   $\lambda_0\triangleq d_0^\phi$,  $d_h$ denotes the distance between   the access point and  ${\rm U}_m$, $m\geq 1$,   and $d_0$ denotes the distance between ${\rm U}_0$  and the access point.

\subsection{WPT Assisted NOMA}
Without loss of generality, assume that ${\rm U}_m$ is granted access, where the details for the scheduling strategy will be provided at the end of this subsection. Suppose that the energy-constrained   devices can support WPT, and  
  time-switching WPT is used for its simplicity, which consists of two phases \cite{Zhouzhang13}. During the first  $\alpha T$ seconds, ${\rm U}_m$ performs energy harvesting by using ${\rm U}_0$'s signal, denoted by $s_0$,  and then uses the harvested energy for its transmit power to send its signal $s_m$ to the access point, where $\alpha$ denotes the time-switching parameter, $0\leq \alpha\leq 1$ and $T$ denotes the block period. Therefore, the amount of energy harvested at ${\rm U}_m$ is $\eta P|g_m|^2\alpha T$, where $P$ denotes ${\rm U}_0$'s transmit power,  $\eta$ denotes the energy harvesting efficiency coefficient. This means that the observation at the access point is given by
\begin{align}
y_{{\rm AP}} =\sqrt{P}h_0s_0+ \sqrt{ \frac{\eta P|g_m|^2\alpha }{1-\alpha}} h_m s_m +n_{{\rm AP}},
\end{align}
where $n_{{\rm AP}}$ denotes the noise. 

For the proposed WPT-NOMA scheme, hybrid SIC is applied \cite{hsic01,hsic02}. In particular, if $s_m$ is decoded first, ${\rm U}_m$'s maximal data rate without causing the failure of SIC (or degrading ${\rm U}_0$'s performance)  is given by
\begin{align} \label{eqwp1}
{\rm R}^{WP,1}_m =(1-\alpha)\log\left( 1+  \frac{\eta P\bar{\alpha}|g_m|^2 |h_m|^2}{P|h_0|^2+1} \right) , 
\end{align}
where $\bar{\alpha}=\frac{\alpha}{1-\alpha}$ and the noise power is assumed to be normalized. If  ${\rm U}_0$'s signal is decoded first, ${\rm U}_0$'s achievable  data rate is given by
\begin{align}\label{eqwp2}
{\rm R}^{WP,2}_{0,m} =(1-\alpha)\log\left(1+ \frac{P|h_0|^2}{\eta P\bar{\alpha}|g_m|^2 |h_m|^2+1}\right). 
\end{align}
 Denote ${\rm U }_0$'s target data rate by $R_0$. If  ${\rm R}^{WP,2}_{0,m}\geq R_0$, $s_0$ can be successfully decoded and removed, which means that $s_m$ can be decoded correctly with the following data rate:
\begin{align} \label{eqwp3}
{\rm R}^{WP,2}_m = (1-\alpha)\log\left(1+  \eta P\bar{\alpha}  |g_m|^2|h_m|^2 \right). 
\end{align}

\subsubsection*{ Device Scheduling for WPT-NOMA} The aim of device scheduling is to ensure that the delay-tolerant device which yields the largest data rate can be selected, under the condition that ${\rm U}_0$'s QoS requirements are strictly guaranteed. 
Note that 
${\rm R}^{WP,2}_{0,m} \geq R_0$ is equivalent to the following inequality: 
\begin{align}
   \gamma_m\leq   \frac{|h_0|^2}{\bar{\epsilon}_0\eta \bar{\alpha}} -\frac{1}{\eta P\bar{\alpha}} ,
\end{align}
where $\bar{\epsilon}_0 = 2^{\frac{R_0}{1-\alpha}}-1$. Furthermore,   define $\bar{\epsilon}_s=2^{\frac{R_s}{1-\alpha}}-1$ and  $\tau(h_0) = \max\left\{0, \frac{|h_0|^2}{\bar{\epsilon}_0\eta \bar{\alpha}} -\frac{1}{\eta P\bar{\alpha}}  \right\}$, where it is assumed that  ${\rm U}_m$, $1\leq m\leq M$, have the same target data rate, denoted by $R_s$. 
The delay-tolerant IoT devices can be divided into the two  groups, denoted by $\mathcal{S}_1$ and $\mathcal{S}_2$, respectively, as defined in the following: 
\begin{itemize}
\item $\mathcal{S}_1$ contains the devices whose channel gains satisfy  $\gamma_m>\tau(h_0)$. If one device from $\mathcal{S}_1$  is scheduled,  its signal has to be  decoded at the first stage of SIC, which yields the data rate ${\rm R}^{WP,1}_m$. 

\item $\mathcal{S}_2$ contains the devices whose channel gains satisfy  $\gamma_m\leq \tau(h_0)$. If one devices from $\mathcal{S}_2$ is scheduled,  its signal can be   decoded   either at   the first stage of SIC (which yields the data rate ${\rm R}^{WP,1}_m$) or at  the second stage of SIC (which yields the data rate ${\rm R}^{WP,2}_m$). Since $ {\rm R}^{WP,1}_m\leq {\rm R}^{WP,2}_m$ always holds, ${\rm U}_m$ always prefers its signal to be decoded at the second stage of SIC.

\end{itemize}

The access point selects the delay-tolerant device which yields the largest data rate, i.e.,
\begin{align} \label{swipt select}
m^* = \arg \max \left\{\max\left\{{\rm R}^{WP,1}_m, m\in\mathcal{S}_1\right\} ,\max\left\{{\rm R}^{WP,2}_m, m\in\mathcal{S}_2\right\} \right\}.
\end{align}

{\it Remark 1:} As can be observed from \eqref{eqwp1}, \eqref{eqwp2}, and \eqref{eqwp3}, the use of time-switching reduces the time duration for data transmission, since the first $\alpha T$ seconds are  used for energy harvesting. This feature of WPT-NOMA can lead to a potential performance loss compared BAC-NOMA which can support continuous data transmission.

\subsection{BackCom-Assisted NOMA}
Again  assume that ${\rm U}_m$ is granted access, where the details for the BAC-NOMA scheduling strategy will be provided later. 
Suppose that the energy-constrained  devices are capable to carry out backscatter communications. Therefore, the access point receives the following signal:\footnote{We assume that the symbol periods of different devices are   same, where the design of BAC-NOMA for the case with devices using different symbol periods is beyond  the scope of this paper. }
\begin{align}\label{eq1}
y_{{\rm AP}} = \sqrt{P}h_0s_0 +  \sqrt{P}\beta g_m h_ms_0s_m +n_{{\rm AP}},
\end{align}
where  $\beta$ denotes the BackCom power reflection coefficient. Unlike WPT-NOMA,  in BAC-NOMA, there is only one choice for the SIC decoding order,  which is to decode ${\rm U}_0$'s signal first. The reason for this is that   ${\rm U}_0$'s signal can be viewed as a fading channel for ${\rm U}_m$'s signal. In order to implement coherent detection,  ${\rm U}_0$'s signal, i.e., the virtual fading channel, needs to be decoded first.   
Therefore, in BAC-NOMA,  ${\rm U}_0$'s achievable data rate is given by
\begin{align}
{\rm R}_{0,m}^{BAC} = \log\left(1+\frac{P|h_0|^2}{P\beta^2 |g_m|^2 |h_m|^2+1}\right).
\end{align}

Assuming that  ${\rm U}_0$'s signal can be correctly decoded, i.e., $ {\rm R}^{BAC}_{0,m} \geq R_0$,     ${\rm U}_0$'s signal can be removed, which leads to the following system model:  
\begin{align}\label{eq2}
y_{{\rm AP}} - \sqrt{P}h_0s_0  =   \sqrt{P}\beta g_m h_ms_0s_m +n_{{\rm AP}}.
\end{align}
Therefore, an achievable data rate  for decoding $s_m$ is given by
\begin{align}\label{eq3}
{\rm R}^{BAC}_m =   \log\left(1+ {P}\beta^2 |g_m |^2 |h_m|^2|s_0|^2 \right), 
\end{align}
where ${\rm U}_0$'s signal, $s_0$, is viewed as a fast fading channel gain for $s_m$. Similar to  \cite{8907447,8636518}, it is assumed that $s_m \sim CN(0,1)$, i.e., the probability density function (pdf) of this virtual fading channel, $|s_0|^2$, is     $f_{|s_0|^2}(x)=e^{-x}$.

\subsubsection*{ Device Scheduling for BAC-NOMA}  In OMA, ${\rm U}_0$ is allowed to solely occupy the channel, whereas the use of NOMA   ensures that the backscatter devices can also be granted access. In order to guarantee    ${\rm U}_0$'s QoS requirements,  device ${\rm U}_m$ can be granted access only if   ${\rm R}_{0,m}^{BAC}\geq R_0$ which can be rewritten as follows: 
\begin{align}
|g_m|^2 |h_m|^2 \leq  \beta^{-2} \epsilon_0^{-1}|h_0|^2- \beta^{-2}P^{-1}  
\end{align}  
where $\epsilon_0=2^{R_0}-1$.  

On the other hand, it is ideal to admit  the   device which can maximize  the data rate $ {\rm R}^{BAC}_m $. 
Therefore, the device scheduling criterion is given by
\begin{align}\label{bd selection}
m^* = \arg \max\left\{ {\rm R}^{BAC}_m , m\in \mathcal{S}_0\right\},
\end{align}
where $\mathcal{S}_0=\left\{ m:  {\rm R}^{BAC}_{0,m} \geq R_0,1\leq m \leq M \right\}$.

{\it Remark 2:}  Unlike WPT-NOMA, BAC-NOMA can support one SIC decoding order only, which is the reason why it suffers an outage probability error floor, as shown in the next section.   Another feature of BAC-NOMA is that   $s_0$   is treated as a virtual fading channel, which means  $s_m$ suffers additional fading attenuation. The impact of this virtual fading channel on the reception reliability of $s_m$ will be investigated in the following section.  

{\it Remark 3:} We note that the two proposed device scheduling strategies  can be carried out  in a distributed manner. Take BAC-NOMA as an  example. Each backscatter device decides   to participate in contention, if $  {\rm R}_{0,m}^{BAC} >R_0, m\in \mathcal{S}$, otherwise it switches  to the match state. Each device calculates its backoff time inversely proportionally to its achievable data rate $ {\rm R}^{BAC}_m$, which ensures that   $ {\rm U}_{m^*}$ can be granted access in a distributed manner. 

\section{Performance Analysis for WPT-NOMA}
Since the implementation of WPT-NOMA is transparent to ${\rm U}_0$, we only focus on the performance of the admitted delay-tolerant   energy-constrained   device. Denote the  effective channel gains of the devices  by  $ \gamma_m=|g_m |^2 |h_m|^2$.  In order to simplify notations, without of loss of generality,   assume that the delay-tolerant devices  are ordered according to their effective channel gain as follows:
\begin{align} \label{channel order}
\gamma_1\leq \cdots\leq \gamma_M.
\end{align}
With this channel ordering, the impact of device scheduling on the NOMA transmission can be  shown  explicitly. Particularly, denote   $\bar{E}_m$ by the event that the size of $\mathcal{S}_2$ is $m$, i.e., $\bar{E}_m$ can be expressed as follows:
\begin{align}
\bar{E}_m = \left\{ \gamma_{m}<\tau(h_0), \gamma_{m+1}>\tau(h_0) \right\},
\end{align}
for $1\leq m\leq M-1$, where   $\bar{E}_0 = \left\{ \gamma_{1}>\tau(h_0)  \right\}$ and $\bar{E}_M= \left\{  \gamma_{M}<\tau(h_0) \right\}$. 

The outage probability achieved by WPT-NOMA can be expressed as follows: 
\begin{align}\nonumber
{\rm P}^{WP} =&\sum^{M}_{m=1} \underset{T_m}{\underbrace{{\rm P}\left(  \max\left\{ {\rm R}_{m} ^{WP,2},  {\rm R}_{M} ^{WP,1}\right\}<R_s , |\mathcal{S}_2|=m\right) }}\\  &+\underset{T_0}{\underbrace{{\rm P}\left(     {\rm R}_{M} ^{WP,1} <R_s , |\mathcal{S}_2|=0\right)  }}  . \label{wp outage}
\end{align}

We note that the performance analysis requires the pdf and cumulative distribution function (CDF) of the ordered channel gain $\gamma_m$, which can be found by using the density functions of the unordered channel gain. In particular, the pdf of the unordered effective channel gain   is given by \cite{8636518}
\begin{align}\label{pdf gamma} 
f_{\gamma}(x) =   2 \lambda_h\lambda_g K_0\left(2\sqrt{  \lambda_h\lambda_g x}\right),
\end{align}
where $K_i(\cdot)$ denotes the $i^{\rm th}$-order modified Bessel function of the second kind. 
The  CDF of the unordered channel gain, denoted by  $F_{\gamma}(x) $,  can be obtained as follows:
\begin{align}\nonumber 
F_{\gamma}(x)  =& \int^{x}_{0}  2\lambda_h\lambda_gK_0\left(2\sqrt{  \lambda_h\lambda_gy}\right) dy= \frac{4}{\lambda_h\lambda_g} x \int^{1}_{0} K_0\left(2t\sqrt{x}\sqrt{ \frac{1}{\lambda_h\lambda_g}}\right) tdt
\\\label{cdf gamma} 
=&1 - 2 \sqrt{ \lambda_h\lambda_g x}K_1\left(2 \sqrt{ \lambda_h\lambda_g x}\right)  ,
\end{align}
where   \cite[(6.561.8)]{GRADSHTEYN} is used. As can be observed from \eqref{pdf gamma} and \eqref{cdf gamma}, the density functions of the unordered channel gains contain Bessel functions, which makes it difficult to obtain an exact expression for the outage probability achieved by WPT-NOMA. However, the diversity gain achieved by WPT-NOMA can be obtained, as shown in the following theorem.  

\begin{theorem}\label{theorem1}
For the considered NOMA uplink scenario,    WPT-NOMA can realize a diversity gain of $M$, if $\bar{\epsilon}_0 \bar{\epsilon}_s<1$.
\end{theorem}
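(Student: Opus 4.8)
The plan is to work directly from the outage decomposition \eqref{wp outage}, isolate the single dominant term, and show that every other term decays strictly faster in $P$. First I would convert the rate events into channel-gain events: from \eqref{eqwp1}--\eqref{eqwp3}, the condition ${\rm R}^{WP,2}_m<R_s$ is equivalent to $\gamma_m<\theta_2$ with $\theta_2\triangleq\frac{\bar\epsilon_s}{\eta P\bar\alpha}$, while ${\rm R}^{WP,1}_M<R_s$ is equivalent to $\gamma_M<\theta_1$ with $\theta_1\triangleq\frac{\bar\epsilon_s(P|h_0|^2+1)}{\eta P\bar\alpha}$; the membership events $\bar E_m$ are already expressed through $\tau(h_0)$, and $\theta_1>\theta_2$ always. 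Because the gains are ordered, $\gamma_M<\theta_1$ forces every $\gamma_i<\theta_1$, and $\gamma_m<\min\{\theta_2,\tau(h_0)\}$ forces $\gamma_1,\dots,\gamma_m$ below that same bound; this lets me read off directly how many channel gains must be simultaneously small in each term.

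Next I would single out $T_M$ (all devices in $\mathcal{S}_2$), which collapses to $T_M={\rm P}(\gamma_M<\min\{\theta_2,\tau(h_0)\})$ since ${\rm R}^{WP,2}_M\ge{\rm R}^{WP,1}_M$. Conditioning on $|h_0|^2$, for the typical range $|h_0|^2=O(1)$ one has $\tau(h_0)=O(1)\gg\theta_2=O(1/P)$, so the minimum equals $\theta_2$ and $T_M\sim F_\gamma(\theta_2)^M$ because $\gamma_M$ is the maximum of $M$ i.i.d.\ gains. Using the small-argument expansion of the CDF \eqref{cdf gamma}, namely $zK_1(z)=1+O(z^2\ln z)$ which yields $F_\gamma(x)\sim\lambda_h\lambda_g x\ln\frac1x$, gives $T_M\sim\left(\frac{\lambda_h\lambda_g\bar\epsilon_s}{\eta\bar\alpha}\,\frac{\ln P}{P}\right)^M$. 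Since $\frac{\log\ln P}{\log P}\to0$, this term alone contributes a diversity order of exactly $M$, the Bessel logarithm surviving only as a harmless $(\ln P)^M$ factor.

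The crux is then to show that the hypothesis $\bar\epsilon_0\bar\epsilon_s<1$ renders every remaining term $O\big(P^{-(M+1)}(\ln P)^M\big)$, i.e.\ of strictly higher diversity. The key inequality is that for $|h_0|^2=O(1)$ the requirement $\tau(h_0)<\theta_1$ reduces in the limit to $\frac1{\bar\epsilon_0}<\bar\epsilon_s$, which fails precisely when $\bar\epsilon_0\bar\epsilon_s<1$; solving the finite-$P$ inequality shows $\tau(h_0)<\theta_1$ can hold only on $|h_0|^2<\frac{\bar\epsilon_0(1+\bar\epsilon_s)}{P(1-\bar\epsilon_0\bar\epsilon_s)}=O(1/P)$. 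For $T_0$ and for $T_m$ with $1\le m\le M-1$, the outage event demands both $\gamma_M<\theta_1$ and the presence of a device in $\mathcal{S}_1$ (i.e.\ $\gamma_{m+1}>\tau(h_0)$, or $\gamma_1>\tau(h_0)$ for $T_0$), which is feasible only on that $O(1/P)$-measure set of $h_0$; there $\theta_1=O(1/P)$ as well, so all $M$ gains are forced to be simultaneously $O(1/P)$. Multiplying the factor ${\rm P}(|h_0|^2=O(1/P))\sim1/P$ by $F_\gamma(O(1/P))^M\sim(\ln P/P)^M$ produces the claimed extra order in $P$.

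Finally I would assemble $P^{WP}=T_M+o(T_M)$ and invoke the definition of diversity gain $d=-\lim_{P\to\infty}\frac{\log P^{WP}}{\log P}=M$, the logarithmic factors being immaterial to the limit; a matching lower bound $T_M\ge F_\gamma(\theta_2)^M\,{\rm P}(|h_0|^2>c)$ for any fixed $c>0$ confirms the order cannot exceed $M$. I expect the main obstacle to be the bookkeeping over the three competing thresholds $\tau(h_0)$, $\theta_1$, $\theta_2$ as $h_0$ varies --- in particular cleanly separating the $|h_0|^2=O(1)$ regime that generates $T_M$ from the $|h_0|^2=O(1/P)$ regime that generates only the subdominant terms --- together with verifying that the Bessel-function logarithm degrades the rate by at most a $(\ln P)^M$ factor and therefore leaves the diversity order untouched.
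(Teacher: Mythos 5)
Your proposal is correct and follows the same overall route as the paper's proof: the same decomposition \eqref{wp outage} into $T_0$, $T_m$ ($1\le m\le M-1$), and $T_M$; the same identification of $T_M$ as the dominant term; the same use of the hypothesis $\bar{\epsilon}_0\bar{\epsilon}_s<1$ to confine $|h_0|^2$ to an interval of length $\mathcal{O}(1/P)$ for the subdominant terms (the paper's constraint \eqref{upper bound h}); and the same small-argument Bessel expansion $F_{\gamma}(x)\approx -\lambda_h\lambda_g x\ln(\lambda_h\lambda_g x)$ from \eqref{bessel approximation}. Where you genuinely differ is in execution, and your version is leaner: the paper evaluates each term through the exact joint pdfs of the order statistics --- $(\gamma_1,\gamma_M)$ for $T_0$, $(\gamma_m,\gamma_{m+1},\gamma_M)$ for $T_m$, and $\gamma_M$ alone for $T_M$ --- and then extracts the dominant $(\ln P)^M$ contributions via the substitution $y=Px$ and binomial expansions; you replace all of that integral asymptotics with containment and monotonicity bounds: on the confined $h_0$-set each of the $M$ i.i.d.\ effective gains must fall below a threshold of order $1/P$, so the conditional probability is at most $F_{\gamma}\left(\mathcal{O}(1/P)\right)^M$, and the measure of the $h_0$-set supplies the extra factor $1/P$, giving $\mathcal{O}\left(P^{-(M+1)}\ln^M P\right)$ for $T_0$ and every $T_m$, $m\le M-1$, without computing any integral exactly. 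That cruder bound suffices because only the exponent of $P$ matters for the diversity limit. You also supply something the paper leaves implicit: the matching lower bound $T_M\ge {\rm P}\left(|h_0|^2>c\right)F_{\gamma}\left(\frac{\bar{\epsilon}_s}{\eta P\bar{\alpha}}\right)^M$, which pins the diversity gain at exactly $M$ rather than merely at least $M$ (the paper gets this from its asymptotically exact evaluation of $T_M$). One cosmetic discrepancy: you obtain $T_M$ of order $(\ln P/P)^M$ while the paper states $\mathcal{O}\left(P^{-M}\ln^{M-1}P\right)$; the mismatch in the logarithmic exponent is immaterial, since $\log\ln P/\log P\rightarrow 0$ either way.
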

\begin{proof}
See Appendix \ref{proof1}. 
\end{proof}

{\it Remark 3:} Theorem \ref{theorem1} shows that the diversity gain achieved by WPT-NOMA is not zero, which implies that WPT-NOMA does not suffer any outage probability error floors, a feature   not achievable to  BAC-NOMA, as shown in the next section. Therefore, WPT-NOMA is a more robust transmission solution, compared to BAC-NOMA, particularly at high SNR. 

{\it Remark 4:} Note that $M$ is the maximal multi-user diversity gain achievable to the considered NOMA uplink scenario, since there are $M$ delay-tolerant devices competing for the access.  Theorem \ref{theorem1} shows that the maximal diversity gain can be realized by WPT-NOMA, even though battery-less transmission is used. Therefore, WPT-NOMA is particularly attractive for energy-constrained  IoT devices which have strict requirements for reception reliability.  


{\it Remark 5:} We note that the conclusion that there is no outage probability error floor  also holds for the special case $M=1$, i.e., there is a single delay-tolerant device and device scheduling is not carried out.  This implies that the outage probability error floor is avoided  due to the use of hybrid SIC, instead of   device scheduling

\section{Performance Analysis for BAC-NOMA}
Again because   the implementation of NOMA is transparent to ${\rm U}_0$, we only focus on the performance of the admitted delay-tolerant  device.  The outage probability of interest is expressed as follows:
\begin{align}
{\rm P}^{BAC} = {\rm P}\left(  {\rm R}_{m^*} ^{BAC}<R_s , |\mathcal{S}_0|\neq 0\right) + {\rm P}\left(  |\mathcal{S}_0|= 0\right) ,
\end{align}
where $|\mathcal{S}|$ denotes the size of set $\mathcal{S}$.

Assume that the devices' channel gains are ordered as in \eqref{channel order}. Denote   $E_m$ by the event that the size of $\mathcal{S}_0$ is $m$, i.e., $E_m$ can be expressed as follows:
\begin{align}
E_m = \left\{ \gamma_{m}<\theta(h_0), \gamma_{m+1}>\theta(h_0) \right\},
\end{align}
for $1\leq m\leq M-1$, where $\theta(h_0)=\beta^{-2} \epsilon_0^{-1}|h_0|^2- \beta^{-2}P^{-1} $. We note that $E_0 = \left\{ \gamma_{1}>\theta(h_0)  \right\}$ and $E_M= \left\{  \gamma_{M}<\theta(h_0) \right\}$. 

The use of \eqref{bd selection} and \eqref{channel order} means that $ {\rm U}_{m}$ will be granted access, for the   event $E_m$. Therefore, the outage probability can be further written as follows:
\begin{align}\label{outage bdd}
{\rm P}^{BAC} =& \sum^{M}_{m=1}\underset{Q_m}{\underbrace{{\rm P}\left( {\rm R}^{BAC}_{m} <R_s , E_m\right) }} + {\rm P}\left( E_0\right) . 
\end{align}

We note that ${\rm P}^{BAC}$ is more challenging to analyze, compared to ${\rm P}^{WP}$, because there are more random variables involved. In the following, we focused on two key features of WPT-NOMA.

\subsection{Outage Probability Error Floor}
In this subsection, we will show that BAC-NOMA suffers from an outage probability error floor. The existence of the   error floor can be sufficiently proved by focusing on a lower bound on the outage probability as shown in the following:
\begin{align}\label{floor1}
{\rm P}^{BAC} \geq  {\rm P}\left( E_0\right) . 
\end{align}

The simulation results provided in Section \ref{section simulation} show that      $E_0$ is indeed the most damaging event  at high SNR, compared to the terms $Q_m$, $1\leq m \leq M$.   $ {\rm P}\left( E_0\right) $ can be expressed as follows: 
\begin{align}
 {\rm P}\left( E_0\right)  =& {\rm P}\left(\gamma_1>\beta^{-2} \epsilon_0^{-1}|h_0|^2- \beta^{-2}P^{-1}\right)
\\\nonumber 
=& {\rm P}\left(\beta^2 \epsilon_0\gamma_1+  \epsilon_0 P^{-1}> |h_0|^2>\epsilon_0P^{-1} \right)  \\\nonumber &+{\rm P}\left(  |h_0|^2<\epsilon_0P^{-1} \right) .
\end{align}
Denote $f_{\gamma_1}(x) \triangleq  M f_{\gamma}(x)  \left(1-F_{\gamma}(x)\right)^{M-1}$ by the marginal pdf of the smallest order statistics, and hence $ {\rm P}_{E_0}  $ can be expressed as follows:
\begin{align}\nonumber
 {\rm P}\left( E_0\right)  
=& \int^{\infty}_{0}\left(e^{-\lambda_0\epsilon_0P^{-1}  }- e^{ -\lambda_0(  \beta^2 \epsilon_0x+  \epsilon_0 P^{-1})} \right)f_{\gamma_1}(x)dx \\\nonumber& +1 - e^{-\lambda_0\epsilon_0P^{-1} }
\\\nonumber  
=& 1 -Me^{ -\lambda_0   \epsilon_0 P^{-1}} \int^{\infty}_{0} e^{ -\lambda_0 \beta^2 \epsilon_0x } f_{\gamma}(x)\\ &\times \left(1-F_{\gamma}(x)\right)^{M-1}dx  .
\end{align}

At high SNR, i.e., $P\rightarrow \infty$,  $ {\rm P}\left( E_0\right) $ can be approximated as follows:
\begin{align}\nonumber
 {\rm P}\left( E_0\right) 
\approx & 1 -M  \int^{\infty}_{0} e^{ -\lambda_0 \beta^2 \epsilon_0x } f_{\gamma}(x)\left(1-F_{\gamma}(x)\right)^{M-1}dx  
\\ \label{floor2} 
\approx &   \lambda_0 \beta^2 \epsilon_0\int^{\infty}_{0} e^{ -\lambda_0 \beta^2 \epsilon_0x }  \left(1-F_{\gamma}(x)\right)^{M} dx,
\end{align}
which is constant and not a function of $P$. Combining \eqref{floor1} with \eqref{floor2}, it is sufficient to conclude  that BAC-NOMA transmission suffers an outage probability error floor. 

{\it Remark 6:} This finding is  consistent to the conclusions made in \cite{8636518}. The reason for the existence of this error floor is due to the fact that only one SIC decoding order can be used by BAC-NOMA. Compared to BAC-NOMA, WPT-NOMA can avoid this error floor and hence  outperform BAC-NOMA  at high SNR. 

{\it Remark 7:} Theorem \ref{theorem1}  indicates that WPT-NOMA can utilize the multi-user diversity, and hence a nature question  is whether BAC-NOMA can also use the multi-user diversity, i.e., whether it is beneficial to invite more delay-tolerant  devices to participate in transmission in BAC-NOMA.   By applying the EVT, the following lemma can be obtained for this purpose. 

\begin{lemma}\label{lemma1}
The error floor caused by $ {\rm P}\left( E_0\right) $ can be reduced to zero by increasing the number of participating delay-tolerant  devices $M$ and the transmit power $P$. 
\end{lemma}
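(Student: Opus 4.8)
The starting point is the high-SNR error floor in \eqref{floor2}, so it suffices to show that the constant
\begin{align}\label{lem1goal}
I_M \triangleq \lambda_0\beta^2\epsilon_0\int_0^\infty e^{-\lambda_0\beta^2\epsilon_0 x}\left(1-F_\gamma(x)\right)^M dx
\end{align}
tends to zero as $M$ grows, the role of $P$ being only to push the outage down onto this floor. The plan is to read $\left(1-F_\gamma(x)\right)^M={\rm P}(\gamma_1>x)$ as the survival function of the smallest order statistic $\gamma_1=\min_{1\le m\le M}\gamma_m$, so that $I_M$ is governed entirely by the behaviour of the minimum of $M$ i.i.d. effective channel gains; this is exactly the regime in which EVT \cite{Arnoldbook,1710338} applies. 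As $M$ increases, $\gamma_1$ concentrates at the left endpoint of the support, $x=0$, so the integrand in \eqref{lem1goal} is squeezed onto a vanishing neighbourhood of the origin and $I_M$ should decay.

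To make this quantitative I first need the near-origin behaviour of $F_\gamma$. Using the small-argument expansion $zK_1(z)=1+\tfrac{z^2}{2}\ln\tfrac{z}{2}+O(z^2)$ in \eqref{cdf gamma} with $z=2\sqrt{\lambda_h\lambda_g x}$ gives
\begin{align}\label{lem1cdf}
F_\gamma(x) \approx \lambda_h\lambda_g\, x\,\ln\frac{1}{\lambda_h\lambda_g x},\qquad x\to 0^+,
\end{align}
so $F_\gamma$ is regularly varying of index one (up to the slowly varying logarithm) at the origin, which places the distribution of $\gamma_1$ in the Weibull-type domain of attraction for minima. In particular $0<1-F_\gamma(x)<1$ for every $x>0$, hence $\left(1-F_\gamma(x)\right)^M\to 0$ pointwise; since the integrand is dominated by the integrable function $\lambda_0\beta^2\epsilon_0 e^{-\lambda_0\beta^2\epsilon_0 x}$, dominated convergence already yields $I_M\to 0$, which proves the lemma.

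To expose the decay rate --- the genuinely EVT part --- I would then use \eqref{lem1cdf} together with $\left(1-F_\gamma(x)\right)^M\approx e^{-MF_\gamma(x)}$, replace $e^{-\lambda_0\beta^2\epsilon_0 x}$ by $1$ on the shrinking support, and substitute $u=\lambda_h\lambda_g x$ to reduce $I_M$ to a Laplace-type integral $\propto\int_0^\infty e^{-M u\ln(1/u)}du$. The exponent $u\ln(1/u)$ produces a boundary maximum of the integrand at $u=0$, and balancing $Mu\ln(1/u)=O(1)$ gives an effective width $u=O\!\left(\tfrac{1}{M\ln M}\right)$, so $I_M=O\!\left(\tfrac{1}{M\ln M}\right)$. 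This both confirms $I_M\to0$ and quantifies how inviting more delay-tolerant devices suppresses the floor. The main obstacle is precisely this rate step: because of the logarithmic factor in \eqref{lem1cdf}, $F_\gamma$ is not a clean power law at $0$, so $\gamma_1$ is not exactly Weibull and the Laplace analysis must carry the $u\ln(1/u)$ exponent; rigorously justifying the concentration (extending the truncated tail, replacing $e^{-\lambda_0\beta^2\epsilon_0 x}$ by $1$, and controlling the $O(z^2)$ remainder of the Bessel expansion) is where the care is needed, whereas the bare limit is immediate from dominated convergence.
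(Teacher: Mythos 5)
Your proposal is correct and reaches the same quantitative conclusion as the paper, but by a partly different route. The paper runs the full EVT machinery: it verifies that $F_{\gamma}$ lies in the modified-Weibull domain of attraction for minima (index $\breve{\alpha}=1$), computes the normalizing constant $b_m=F^{-1}_{\gamma}(1/M)$ by solving \eqref{lambert1} with the Lambert $W_{-1}$ branch to get $b_m \sim \frac{1}{\lambda_h\lambda_g M \ln M}$, and then integrates the floor against the limiting exponential law of $\gamma_1/b_m$, arriving at ${\rm P}(E_0)\rightarrow \frac{\lambda_0\beta^2\epsilon_0}{\lambda_h\lambda_g M\,\mathcal{O}(\ln M)}$. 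Your argument has two pieces: a dominated-convergence step, which gives the qualitative statement $I_M\rightarrow 0$ immediately and rigorously (in this respect it is actually cleaner than the paper, whose derivation is a chain of asymptotic approximations with no exact limit argument); and a Laplace-method balancing $Mu\ln(1/u)=\mathcal{O}(1)$ giving effective width $u=\mathcal{O}\left(\tfrac{1}{M\ln M}\right)$ and hence $I_M=\mathcal{O}\left(\tfrac{1}{M\ln M}\right)$. Note that your balancing relation is precisely the paper's Lambert-W equation \eqref{lambert1}, $-\lambda_h\lambda_g b_m\ln(\lambda_h\lambda_g b_m)=\tfrac{1}{M}$, so the quantitative cores of the two arguments coincide; the paper's $W_{-1}$ bounds are the rigorous version of your width heuristic. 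What the paper's formal EVT buys is the explicit limiting distribution of $\gamma_1$ and explicit constants; what your route buys is an elementary, airtight proof of the bare limit plus a shorter path to the same $\tfrac{1}{M\ln M}$ rate, at the cost of leaving the truncation/uniformity issues (which you correctly flag: the $e^{-MF_{\gamma}}$ replacement, dropping $e^{-\lambda_0\beta^2\epsilon_0 x}$, and the Bessel remainder) unverified. Both treatments handle $P$ identically: increasing $P$ only collapses ${\rm P}(E_0)$ onto the floor \eqref{floor2}, and the $M$-asymptotics do the rest.
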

\begin{proof}
See Appendix \ref{lemma1proof}.
\end{proof}

\subsection{Impact of $s_0$ on Reception Reliability}
Recall that $s_0$ is treated as a type of fast fading when the signal from the delay-tolerant device is decoded. In this section, we will show that this  fast fading has a harmful impact on the outage probability. To obtain  an insightful conclusion, we consider an ideal situation, in which $E_0$ does not happen. We will show that even in such an ideal situation, the full multi-user diversity gain cannot be realized. Recall     the term $Q_m$,    $1\leq m \leq M-1$, shown in \eqref{outage bdd} can be evaluated as follows: 
 \begin{align}\label{eq89}
Q_m =&  {\rm P}\left( {\rm R}^{BAC}_{m} <R_s ,  \gamma_{m}<\theta(h_0), \gamma_{m+1}>\theta(h_0)\right) \\\nonumber
=&  {\rm P}\left( \gamma_m <\min\{ \epsilon_s {P}^{-1}\beta^{-2}  |s_0|^{-2},  \theta(h_0)\}, \gamma_{m+1}>\theta(h_0)\right) .
\end{align}
Define $a_{s_0,h_0} = \min\{ \epsilon_s {P}^{-1}\beta^{-2}  |s_0|^{-2},  \theta(h_0)\}$. By applying order statistics, the joint pdf of $\gamma_m $ and $\gamma_{m+1}$ is given by \cite{Arnoldbook}
\begin{align}
f_{\gamma_m,\gamma_{m+1}}(x,y) =&\mu_0 f_{\gamma}(x) f_{\gamma}(y)\left(F_{\gamma}(x) \right)^{m-1}\\\nonumber &\times 
\left(1- F_{\gamma}(y) \right)^{M-m-1},
\end{align}
for $x<y$, 
where $\mu_0 =  \frac{M!}{(m-1)! (M-m-1)!} $.  
Therefore, $Q_m$ can be expressed as follows: 
 \begin{align}
Q_m   
=& \bar{\mu}_0\mathcal{E}_{h_0, s_0}\left\{  \int^{\infty}_{\theta(h_0)}f_{\gamma}(y)\left(1- F_{\gamma}(y) \right)^{M-m-1}dy\right.\\ &\times\left. \int^{a_{s_0,h_0}}_{0}
 f_{\gamma}(x) \left(F_{\gamma}(x) \right)^{m-1}
dx
\right\}
\\\nonumber
=&\bar{\mu}_0\mathcal{E}_{h_0, s_0}\left\{  \left(1- F_{\gamma}(\theta(h_0)) \right)^{M-m}  
  \left(F_{\gamma}(a_{s_0,h_0}) \right)^{m}
\right\} ,
\end{align}
where $\bar{\mu}_0=\mu_0\mathcal{E}_{h_0, s_0}$. Because   the density functions of $\gamma_m$ contain Bessel functions, a closed-form expression for $Q_m$ is difficult to obtain, and hence we consider an ideal scenario, in which the connection from  ${\rm U}_0$ to ${\rm U}_m$, $1\leq m\leq M$, is lossless.  This assumption  yields a lower bound on $Q_m$ as follows:
 \begin{align}
Q_m   
\geq&  \bar{\mu}_0\mathcal{E}_{h_0, s_0}\left\{  \left(1- \bar{F}_{\gamma}(\theta(h_0)) \right)^{M-m}  
  \left(\bar{F}_{\gamma}(a_{s_0,h_0}) \right)^{m}
\right\} ,
\end{align}
where $\bar{F}_{\gamma} (x)=1-e^{-\lambda_h x}$.  
For the case $E_m$, $m\geq 1$, we have $\theta(h_0)\geq 0$, which means that $|h_0|^2\geq   \epsilon_0P^{-1} $. 
In addition, $a_{s_0,h_0} = \min\{ \epsilon_s {P}^{-1}\beta^{-2}  |s_0|^{-2},  \theta(h_0)\}= \theta(h_0) $ implies the following
\begin{align}
 |h_0|^2\leq \epsilon_0\epsilon_s {P}^{-1}  |s_0|^{-2}+\epsilon_0P^{-1}  .
\end{align}

By applying the simplified CDF, $\bar{F}_{\gamma}(x)$, the lower bound on $Q_m$ can be expressed as follows:
\begin{align}
Q_m    
\geq&  \bar{\mu}_0  \int^{\infty}_{0}e^{-y} \int^{\frac{\epsilon_0\epsilon_s}{ {P} y}+\frac{\epsilon_0}{P}  }_{\frac{\epsilon_0}{P} } 
  \left(1-e^{-\lambda_h \theta(x)} \right)^{m} \\\nonumber & \times
 e^{-(M-m)\lambda_h \theta(x)}   
  \lambda_0e^{-\lambda_0x}dxdy \\\nonumber
  & +\bar{\mu}_0
 \int^{\infty}_{0} e^{-y} \left(1-e^{-\lambda_h \epsilon_s {P}^{-1}\beta^{-2}  y^{-1}} \right)^{m}\\\nonumber &\times \int_{\frac{\epsilon_0\epsilon_s}{ {P} y}+\frac{\epsilon_0}{P}  }^{\infty}
 e^{-(M-m)\lambda_h \theta(x)} 
   \lambda_0e^{-\lambda_0x}dx dy.
\end{align}
With some algebraic manipulations, the lower bound on $Q_m$ can be approximated at high SNR as follows: 
\begin{align}\label{es32}
Q_m    
\geq&    
    \bar{\mu}_0 \lambda_0  \sum^{m}_{p=0}{m \choose p} (-1)^p  \breve{\mu}_p^{-1} 
  \left[-  \frac{\breve{\mu}_p\epsilon_0\epsilon_s}{ {P} } \ln \frac{\breve{\mu}_p\epsilon_0\epsilon_s}{ {P} }  \right]
 \\\nonumber
  & +\bar{\mu}_0 \lambda_0 \breve{\mu}_0^{-1}\sum^{m}_{p=0}{m \choose p} (-1)^p 
\left(   \frac{4\tilde{\mu}_p}{ P } \ln \frac{4\tilde{\mu}_p}{ P } \right) 
\\\label{bd final1}
\rightarrow & \frac{1}{P\ln^{-1}P},
\end{align}
where the last approximation follows from the fact that each term in \eqref{es32} can be approximated as $\frac{1}{P\ln^{-1}P}$.  

{\it Remark 8:} Following the steps in the proof for Theorem \ref{theorem1} and also using \eqref{bd final1}, it is straightforward to show that the achievable diversity gain is one. In other words, the approximation obtained  in \eqref{bd final1} shows that the existence of virtual fast fading $|s_0|^2$ caps the diversity gain achieved by BAC-NOMA by one, even if the outage probability error floor can be discarded.

\begin{figure}[t] \vspace{-2em}
\begin{center}\subfigure[$R_0=0.1$ BPCU ]{\label{fig1a}\includegraphics[width=0.5\textwidth]{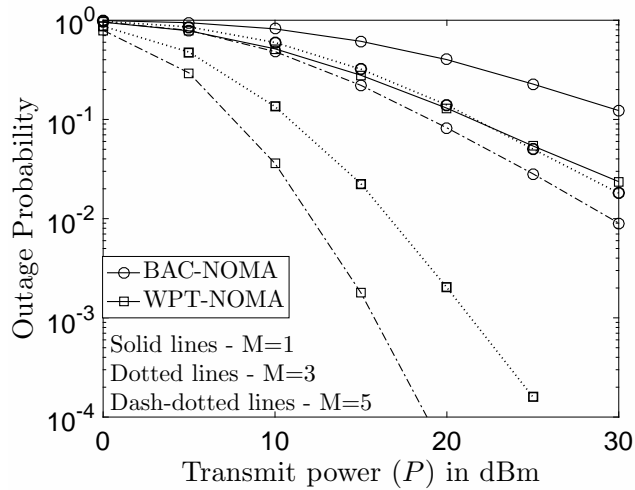}}
\subfigure[$R_0=2$ BPCU]{\label{fig1b}\includegraphics[width=0.5\textwidth]{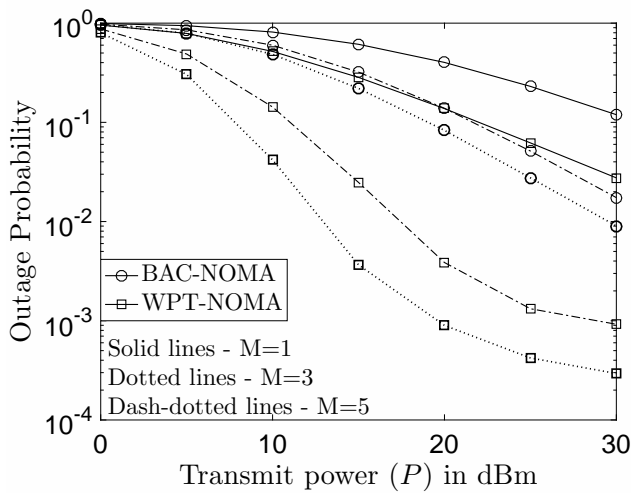}} \vspace{-1em}
\end{center} 
\caption{ Outage performance of  BAC-NOMA and WPT-NOMA.   $R_s=1.2$ bit per channel use (BPCU).  $d_h=d_0 = 50$ m and $d_g=5$ m. $\alpha=0.5$, $\beta=0.1$, and $\eta=0.1$.   \vspace{-1em} }\label{fig1}\vspace{-0.5em}
\end{figure}
\section{Simulation Results}\label{section simulation}
In this section, the performance of the two considered   transmission schemes, BAC-NOMA and WPT-NOMA, is investigated by using computer simulation results. For all the carried out simulations, we choose   $\phi=3.5$ and the noise power is $-94$ dBm. In Fig. \ref{fig1}, the outage performance achieved by WPT-NOMT and BAC-NOMA is studied with different choices of $R_0$. In Fig. \ref{fig1a}, the choice $R_0=0.1$ bits per channel use (BPCU) is used. With $R_0=0.1$ BPCU and $R_s=1.2$ BPCU, it is straightforward to verify that the condition $\bar{\epsilon}_0 \bar{\epsilon}_s<1$ holds. As indicated in Theorem \ref{theorem1}, if  $\bar{\epsilon}_0 \bar{\epsilon}_s<1$ holds, WPT-NOMA can avoid outage probability error floors,   which is consistent to the observations made from Fig. \ref{fig1a}. In addition, Fig. \ref{fig1a} shows that the slope of the outage probability curve for WPT-NOMA is increased when increasing $M$, which indicates that the diversity gain achieved by WPT-NOMA is increased by increasing $M$, an observation also consistent to   the conclusion made in Theorem \ref{theorem1}. In Fig. \ref{fig1b}, the choice $R_0=2$ BPCU is used, which leads to the violation of the condition  $\bar{\epsilon}_0 \bar{\epsilon}_s<1$. As a result, there are error floors for the outage probabilities achieved by WPT-NOMA, as shown in Fig. \ref{fig1b}. 

 \begin{figure}[t]\centering \vspace{-1em}
    \epsfig{file=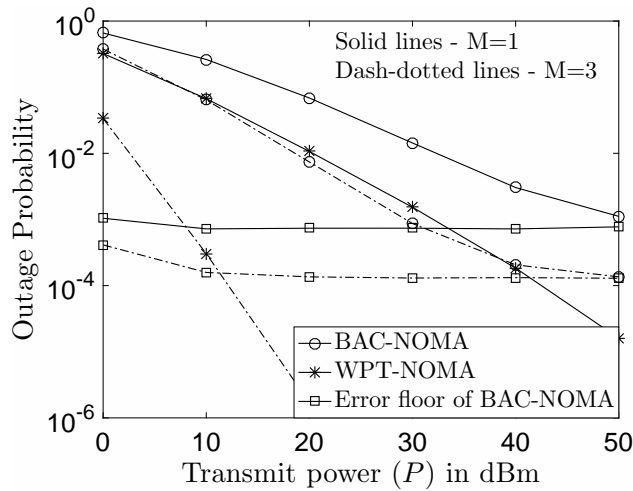, width=0.5\textwidth, clip=}\vspace{-0.5em}
\caption{ Illustration of the outage probability error floor of BAC-NOMA.   $R_0=0.1$ BPCU and $R_s=1.2$ BPCU.  $d_h=d_0 = 100$ m and $d_g=1$ m. $\alpha=0.5$, $\beta=0.1$, and $\eta=0.1$.   \vspace{-1em} }\label{fig3}\vspace{-0.5em}
\end{figure}

On the other hand,  the two figures in Fig. \ref{fig1} show that  BAC-NOMA always suffers outage probability error floors, which is due to the fact that hybrid SIC cannot be implemented in BAC-NOMA systems. In addition, the figures also demonstrate that the performance of BAC-NOMA can be improved by increasing $M$, i.e., inviting more delay-tolerant devices to participate in NOMA transmission is beneficial to improve reception reliability. But unlike WPT-NOMA, increasing $M$ does not change the slope of the outage probability curve for BAC-NOMA. It is worth to point out that  for the two considered choices of $R_0$,   WPT-NOMA can always realize a smaller outage probability than  BAC-NOMA, as shown in Fig. \ref{fig1}.  
 
 In Fig. \ref{fig3}, the outage probability error floor experienced by BAC-NOMA is studied, where the term in the legend, `Error Floor of BAC-NOMA',  refers to ${\rm P}(E_0)$. In order to clearly show the asymptotic behaviour of the outage probability, a larger transmit power range than those in Fig. \ref{fig1} is used. As can be observed from the figure, ${\rm P}(E_0)$ is   a tight lower bound on the outage probability, and it is   constant at high SNR, which implies that $E_0$ is the most damaging event and  is the cause for the error floor of the outage probability. Another important observation is that increasing    $M$ is useful to reduce the error floor, which confirms Lemma \ref{lemma1}. On the other hand,  WPT-NOMA does not suffer any outage probability error floor because the used target rate choices satisfy    $\bar{\epsilon}_0 \bar{\epsilon}_s<1$.  

\begin{figure}[t] \vspace{-2em}
\begin{center}\subfigure[Outage Probability]{\label{fig2a}\includegraphics[width=0.5\textwidth]{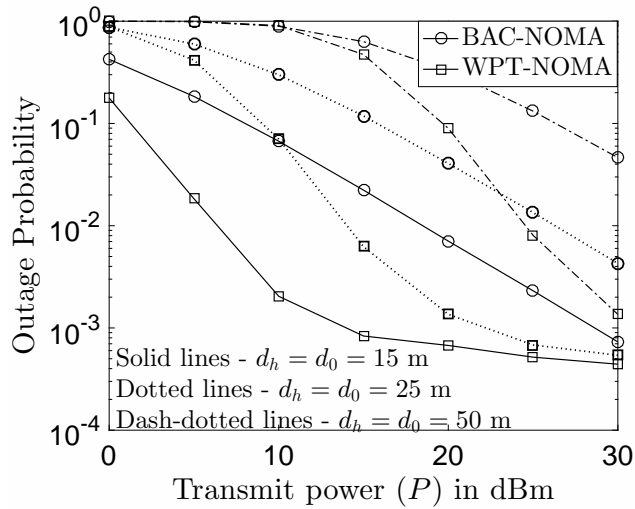}}
\subfigure[Ergodic Data Rate]{\label{fig2b}\includegraphics[width=0.5\textwidth]{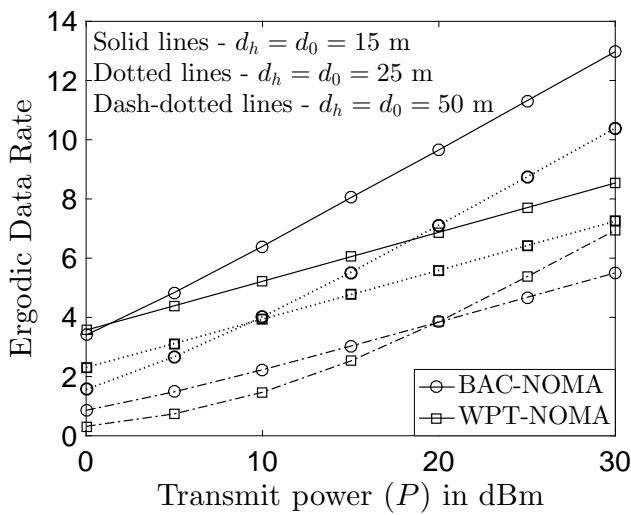}} \vspace{-1em}
\end{center} 
\caption{ Impact of path loss on the   performance of  BAC-NOMA and WPT-NOMA.   $M=5$, $R_0=2$ BPCU, $R_s=3$ BPCU.   $d_g=5$ m. $\alpha=0.5$, $\beta=0.1$, and $\eta=0.1$.   \vspace{-1em} }\label{fig2}\vspace{-0.5em}
\end{figure}

In Fig. \ref{fig2}, the impact of path loss on the performance of WPT-NOMA and BAC-NOMA is studied. In Fig. \ref{fig2a}, the outage probability is used as the metric for the performance evaluation, whereas the ergodic data rate is used as the metric in Fig. \ref{fig2b}. The two figures in Fig. \ref{fig2} show that the performance of the two NOMA schemes is degraded when path loss becomes more severe. This deteriorating  effect of path loss can be explained by using WPT-NOMA as an example. Increasing path loss does not only increase the attenuation of the signal strength, but also reduces the energy harvested at the delay-tolerant devices. For a similar reason, the performance of BAC-NOMA is also significantly affected by path loss. Therefore, the ideal applications of BAC-NOMA and WPT-NOMA   are   indoor   communication scenarios, e.g.,  the distances between the nodes are not large. We note that WPT-NOMA also exhibits outage probability error floors in Fig. \ref{fig2a}, since the condition $\bar{\epsilon}_0 \bar{\epsilon}_s<1$ does not hold, an observation consistent to the previous figures. In addition,  Fig. \ref{fig2a} shows that WPT-NOMA outperforms BAC-NOMA, if  the outage probability is used as the metric for performance evaluation,   which is also consistent to the previous numerical studies. However, Fig. \ref{fig2b} shows an interesting  result that BAC-NOMA can outperform WPT-NOMA if the ergodic rate is used as the performance metric, particularly at high SNR and with small path loss. One possible reason is   that WPT-NOMA relies on the time-switching WPT strategy, i.e.,  the first $\alpha T$ seconds are used for energy harvesting, and the remaining $(1-\alpha)T$ seconds are used for data transmission. On in other words, there is less time available for WPT-NOMA to transmit, whereas BAC-NOMA can carry out transmission continuously. 

In order to clearly demonstrate  the impact of $\alpha$ on the performance of WPT-NOMA, in  Fig. \ref{fig4}, different choices of $\alpha$ are used. In particular, $\alpha=0.1$ and $\alpha=0.9$ are a pair of choices of interest, as explained in the following.  The use of $\alpha=0.1$ means that the delay-tolerant devices use a small amount of time for energy harvesting and the majority time for data transmission, whereas $\alpha=0.9$ means that the majority time is used for energy harvesting. Fig. \ref{fig4} demonstrates that the choice of $\alpha=0.9$ results in the poorest performance among all the choices shown in the figure. This is due to the fact that there is not sufficient time for data transmission, even though a good amount of energy has been harvested and the delay-tolerant devices can use larger transmit powers than that in the case with $\alpha=0.1$.  It is worth pointing out that the choice of $\alpha=0.5$ yields the best performance among the choices shown in the figure.

\begin{figure}[t]\centering \vspace{-1em}
    \epsfig{file=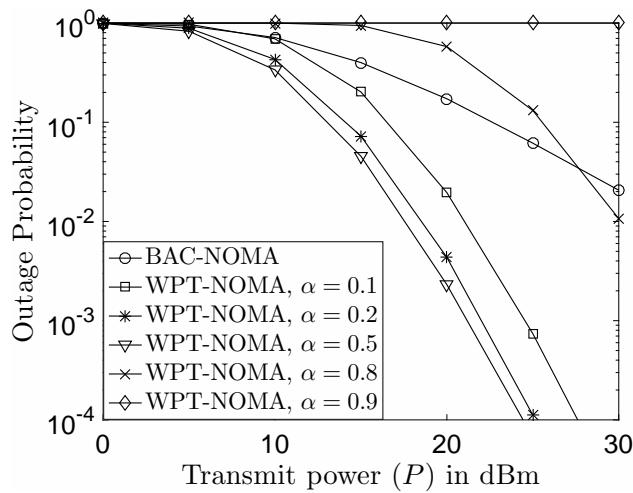, width=0.5\textwidth, clip=}\vspace{-0.5em}
\caption{ Impact of the choices of $\alpha$ on the performance of WPT-NOMA.    $R_0=0.1$ BPCU and $R_s=2$ BPCU.  $d_h=d_0 = 50$ m and $d_g=5$ m. $M=5$, $\beta=0.1$, and $\eta=0.1$.   \vspace{-1em} }\label{fig4}\vspace{-0.5em}
\end{figure}

\section{Conclusions}
 In this paper,  two energy and spectrally efficient transmission strategies,  namely WPT-NOMA and  BAC-NOMA,  were proposed by employing the energy and spectrum cooperation among the IoT devices. For the proposed WPT-NOMA scheme, hybrid SIC was used to improve reception reliability, and the developed analytical results demonstrate that   WPT-NOMA  can avoid outage probability error floors and realize the full diversity gain. Unlike WPT-NOMA, BAC-NOMA suffers from an outage probability error floor, and the asymptotic behaviour of this error floor was analyzed in the paper by applying EVT.  In addition, the effect of using one device's signal as the carrier signal  was studied, and its harmful impact on the diversity gain   was revealed. 
 
 We note that the provided simulation results show that the choice of $\alpha$ has a significant  impact on the performance of WPT-NOMA, and therefore an important direction for future research is to develop low-complexity algorithms for optimizing  $\alpha$. In addition, we note that the reason for BAC-NOMA to suffer the  outage probability error floor is due to the fact that hybrid SIC cannot be implemented. However, provided that ${\rm U}_n$, $1\leq n \leq M$, can carry out non-coherent detection, it is possible to apply hybrid SIC  to BAC-NOMA, which is another important direction for future research. 
 
 \appendices
 
 \section{Proof for Theorem \ref{theorem1}} \label{proof1}
 
The proof for the theorem can be divided to four steps, where the first three steps are to analyze the asymptotic behaviour of $T_0$, $T_m$, $1\leq m\leq M-1$, and $T_M$, respectively., and the last step is to study the overall diversity gain. 
\subsection{Asymptotic Study of $T_0$}
This section focuses on the high-SNR approximation of $T_0$ which can be rewritten as follows:
\begin{align} \nonumber
T_0 = &{\rm P}\left(    {\rm R}_{M} ^{WP,1}<R_s , |\mathcal{S}_2|=0\right)  
\\ \label{T 11}
= &{\rm P}\left(  \gamma_M<\frac{\bar{\epsilon}_s(P|h_0|^2+1)}{\eta P \bar{\alpha}} , \gamma_{1}>\tau(h_0) \right) .
\end{align}
 As can be observed from \eqref{T 11}, $T_0$ is a function of two order statistics,  $\gamma_1$ and $\gamma_M$, whose joint pdf is given by \cite{Arnoldbook}
\begin{align}
f_{\gamma_1,\gamma_M}(x,y) = \frac{M!}{(M-2)!}f_{\gamma}(x)f_{\gamma}(y)\left[F_{\gamma}(y)-F_{\gamma}(x)\right]^{M-2}.
\end{align}
Denote $T_{0|h_0}$ by the value of $T_0$ when $h_0$ is treated as a constant. Therefore,  $T_{0|h_0}$ can be expressed as follows:
\begin{align} \nonumber
T_{0|h_0}  =&  \frac{M!}{(M-2)!}\int^{\frac{\bar{\epsilon}_s(P|h_0|^2+1)}{\eta P \bar{\alpha}}}_{\tau(h_0)}f_{\gamma}(x)\int^{\frac{\bar{\epsilon}_s(P|h_0|^2+1)}{\eta P \bar{\alpha}}}_{x}f_{\gamma}(y)  \left[F_{\gamma}(y)-F_{\gamma}(x)\right]^{M-2}dydx
\\\nonumber
=&  \frac{M!}{(M-1)!}\int^{\frac{\bar{\epsilon}_s(P|h_0|^2+1)}{\eta P \bar{\alpha}}}_{\tau(h_0)}f_{\gamma}(x)
  \left[F_{\gamma}\left( \frac{\bar{\epsilon}_s(P|h_0|^2+1)}{\eta P \bar{\alpha}}\right)-F_{\gamma}(x)\right]^{M-1}dx.
\end{align}
$T_{0|h_0} $ can be further simplified  as follows:
\begin{align} 
T_{0|h_0}  =&    \left[F_{\gamma}\left( \frac{\bar{\epsilon}_s(P|h_0|^2+1)}{\eta P \bar{\alpha}}\right)-F_{\gamma}(\tau(h_0))\right]^{M} .
\end{align}

Therefore, $T_0$ can be obtained by finding the expectation of $T_{0|h_0} $ with respect to $h_0$:
\begin{align}\nonumber
T_0 = & \mathcal{E}_{h_0}\left\{ T_{0|h_0} \right\}. 
\end{align}
We note that $\tau(h_0)$ can have different forms depending on the choice of $|h_0|^2$. In particular, 
  $\tau(h_0)= 0$  means 
\begin{align}
  \frac{|h_0|^2}{\bar{\epsilon}_0\eta \bar{\alpha}} -\frac{1}{\eta P\bar{\alpha}} \leq 0  ,
 \end{align}
 which requires
 \begin{align}\label{low bound h}
  |h_0|^2 \leq \frac{\bar{\epsilon}_0 }{  P }  . 
 \end{align}

For the case $\tau(h_0)\neq 0$, the probability shown in \eqref{T 11} requires  $\tau(h_0)<\frac{\bar{\epsilon}_s(P|h_0|^2+1)}{\eta P \bar{\alpha}} $. This hidden constraint imposes another constraint on $|h_0|^2$ as follows: 
\begin{align}  
 \frac{|h_0|^2}{\bar{\epsilon}_0 } -\frac{1}{  P }  <\frac{\bar{\epsilon}_s(P|h_0|^2+1)}{  P },
 \end{align}
 which can be explicitly expressed as follows:
 \begin{align} 
  |h_0|^2    <    \frac{\bar{\epsilon}_0 (1+\bar{\epsilon}_s) }{  P(1-\bar{\epsilon}_0 \bar{\epsilon}_s ) } .\label{upper bound h}
\end{align}

By using the constraints shown in \eqref{low bound h} and \eqref{upper bound h}, $T_1$ can be expressed as follows:
\begin{align} \label{tx001}
T_0 = & \lambda_0 \int^{\frac{\bar{\epsilon}_0}{P}}_{0}\left[F_{\gamma}\left( \frac{\bar{\epsilon}_s(Px+1)}{\eta P \bar{\alpha}}\right)-F_{\gamma}(0)\right]^{M}  e^{-\lambda_0 x} dx
\\\nonumber & +\lambda_0
\int_{\frac{\bar{\epsilon}_0}{P}}^{{\frac{\bar{\epsilon}_0 (1+\bar{\epsilon}_s) }{  P(1-\bar{\epsilon}_0 \bar{\epsilon}_s ) }} }e^{-\lambda_0 x} \left[F_{\gamma}\left( \frac{\bar{\epsilon}_s(Px+1)}{\eta P \bar{\alpha}}\right)-F_{\gamma}\left( \frac{x}{\bar{\epsilon}_0\eta \bar{\alpha}} -\frac{1}{\eta P\bar{\alpha}}   \right)\right]^{M}   dx.
\end{align}
We note that the upper bound on $|h_0|^2$, ${\frac{\bar{\epsilon}_0 (1+\bar{\epsilon}_s) }{  P(1-\bar{\epsilon}_0 \bar{\epsilon}_s ) }} $,  is crucial to remove outage probability error floors and realize the full diversity gain, as shown in the following.  

In particular,   one can observe that both $ \frac{\bar{\epsilon}_s(Px+1)}{\eta P \bar{\alpha}}$ and $ \frac{x}{\bar{\epsilon}_0\eta \bar{\alpha}} -\frac{1}{\eta P\bar{\alpha}} $ go to zero for $P\rightarrow \infty$ in the two integrals considered in \eqref{tx001}. Therefore, the parameters of the Bessel functions in $T_0$ go to zero for $P\rightarrow \infty$.  Recall that   $xK_1(x) \approx  1+\frac{x^2}{2}\ln \frac{x}{2}$, for $x\rightarrow 0$ \cite{Dingkri04}. Therefore, the CDF of the unordered channel gain can be approximated as follows: 
\begin{align}\label{bessel approximation}
F_{\gamma}(x)   
=&1 - 2 \sqrt{ \lambda_h\lambda_g x}K_1\left(2 \sqrt{ \lambda_h\lambda_g x}\right)  \\\nonumber
\approx & 1 -   \left(  1+  \lambda_h\lambda_g x \ln (\lambda_h\lambda_g x)  \right)= - \lambda_h\lambda_g x \ln (\lambda_h\lambda_g x)  ,
\end{align}
for $x\rightarrow 0$. 
We note that for $x\rightarrow 0$, $\ln (\lambda_h\lambda_g x) <0$ and hence the approximation for $F_{\gamma}(x)   $ in \eqref{bessel approximation} is still positive.

Therefore, $T_0$ can be approximated at high SNR as follows:
\begin{align} \label{343}
T_0 \approx &  \int^{\frac{\bar{\epsilon}_0}{P}}_{0}\left[- \lambda_h\lambda_g  \frac{\bar{\epsilon}_s(Px+1)}{\eta P \bar{\alpha}} \ln \left(\lambda_h\lambda_g  \frac{\bar{\epsilon}_s(Px+1)}{\eta P \bar{\alpha}}\right)    \right]^{M}   dx  \lambda_0\\\nonumber &+\lambda_0
\int_{\frac{\bar{\epsilon}_0}{P}}^{{\frac{\bar{\epsilon}_0 (1+\bar{\epsilon}_s) }{  P(1-\bar{\epsilon}_0 \bar{\epsilon}_s ) }} }\left[ \lambda_h\lambda_g  \left(\frac{x}{\bar{\epsilon}_0\eta \bar{\alpha}} -\frac{1}{\eta P\bar{\alpha}} \right)     \ln \left(\lambda_h\lambda_g  \left(\frac{x}{\bar{\epsilon}_0\eta \bar{\alpha}} -\frac{1}{\eta P\bar{\alpha}} \right)  \right)
\right.
\\\nonumber &\left. 
-  \lambda_h\lambda_g  \frac{\bar{\epsilon}_s(Px+1)}{\eta P \bar{\alpha}} \ln \left(\lambda_h\lambda_g  \frac{\bar{\epsilon}_s(Px+1)}{\eta P \bar{\alpha}}\right) \right]^{M}   dx. 
\end{align}
In order to obtain a more insightful asymptotic expression of $T_0$, the expression in \eqref{343} can be rewritten as follows:
 \begin{align}\nonumber
T_0 \approx & \frac{\lambda_0}{P} \int^{ \bar{\epsilon}_0 }_{0}\left[- \lambda_h\lambda_g  \frac{\bar{\epsilon}_s(y+1)}{\eta P \bar{\alpha}} \ln \left(\lambda_h\lambda_g  \frac{\bar{\epsilon}_s(y+1)}{\eta P \bar{\alpha}}\right)    \right]^{M}   dy\\\nonumber
&+
\frac{\lambda_0}{P}\int_{ \bar{\epsilon}_0 }^{{\frac{\bar{\epsilon}_0 (1+\bar{\epsilon}_s) }{  (1-\bar{\epsilon}_0 \bar{\epsilon}_s ) }} }\left[ \frac{\lambda_h\lambda_g }{P} \left(\frac{y}{\bar{\epsilon}_0\eta \bar{\alpha}} -\frac{1}{\eta \bar{\alpha}} \right) \right. 
\\\nonumber &\left. \times\ln \left(\frac{\lambda_h\lambda_g}{P}  \left(\frac{y}{\bar{\epsilon}_0\eta \bar{\alpha}} -\frac{1}{\eta \bar{\alpha}} \right)  \right)
-  \lambda_h\lambda_g  \frac{\bar{\epsilon}_s(y+1)}{\eta P \bar{\alpha}} \ln \left(\lambda_h\lambda_g  \frac{\bar{\epsilon}_s(y+1)}{\eta P \bar{\alpha}}\right) \right]^{M}   dy
\\\nonumber 
=& \frac{\lambda_0}{P} \int^{ \bar{\epsilon}_0 }_{0}\left[- \frac{b_1(y)}{P} \ln \left(\frac{b_1(y)}{P}\right)    \right]^{M}   dy+
\frac{\lambda_0}{P}\int_{ \bar{\epsilon}_0 }^{{\frac{\bar{\epsilon}_0 (1+\bar{\epsilon}_s) }{  (1-\bar{\epsilon}_0 \bar{\epsilon}_s ) }} }\\ \label{two int}
&\times \left[ \frac{b_2(y)}{P}\ln \left(\frac{b_2(y)}{P} \right)
- \frac{b_1(y)}{P} \ln \left(\frac{b_1(y)}{P}\right) \right]^{M}   dy ,
\end{align}
where $y=Px$, $b_1(y)=\lambda_h\lambda_g  \frac{\bar{\epsilon}_s(y+1)}{\eta  \bar{\alpha}}$ and $b_2(y)=  \lambda_h\lambda_g  \left(\frac{y}{\bar{\epsilon}_0\eta \bar{\alpha}} -\frac{1}{\eta \bar{\alpha}} \right)$. It is important to point out that both $b_1(y)$ and $b_2(y)$ are constant and not functions of $P$. 

Denote the two integrals in \eqref{two int} by $\tilde{Q}_1$ and $\tilde{Q}_2$, respectively.    For $\tilde{Q}_1$,   the following approximation can be used:
 \begin{align}\label{eqx40}
  \frac{b_1(y)}{P}  \ln \left( \frac{b_1(y)}{P} \right) =&  \frac{b_1(y)}{P}  \left[
  \ln  b_1(y)  -\ln P   \right] \\  \underset{P\rightarrow \infty}{\approx }&
-  \frac{ b_1(y)}{P}   \ln P =-  \frac{b_1(y)}{P\ln^{-1}P} , \label{eqx41}
 \end{align}
 since $b_1(y)$ is finite and strictly larger than zero for the integral considered in $ \tilde{Q}_1$. Therefore, $Q_1$ can be approximated as follows:
 \begin{align}
 \tilde{Q}_1&\approx 
 \int^{ \bar{\epsilon}_0 }_{0}\left[\frac{b_1(y)}{P\ln^{-1}P}   \right]^{M}   dy  =   \frac{e_1}{P^{M}\ln^{-M}P} =     \mathcal{O}\left( \frac{1}{P^{M}\ln^{-M}P}\right),
 \end{align}
 where $\mathcal{O}$ denotes the approximation operation by omitting the constant multiplicative coefficient, and  the last approximation  follows from the fact that $e_1= \int^{ \bar{\epsilon}_0 }_{0}\left[ {b_1(y)}   \right]^{M}   dy$  is constant and not a function of $P$. 
 
 The approximation for $\tilde{Q}_2$ is more complicated since $b_2(y)$ can be zero for the considered integral and hence $\ln b_2(y)$ can be unbounded.  Unlike $\tilde{Q}_1$, $\tilde{Q}_2$ can be approximated as follows:
  \begin{align} \nonumber
\tilde{Q}_2=&\sum^{M}_{p=0}\frac{(-1)^p}{P^M}{M\choose p}
 \int_{ \bar{\epsilon}_0 }^{{\frac{\bar{\epsilon}_0 (1+\bar{\epsilon}_s) }{  (1-\bar{\epsilon}_0 \bar{\epsilon}_s ) }} }b_2(y) ^{M-p}b_1(y)^p \left[   \ln b_2(y) -\ln P 
 \right]^{M-p} \left[   \ln  b_1(y) - \ln P \right]^{p}   dy 
 \\\nonumber
 =&\sum^{M}_{p=0}\frac{(-1)^p}{P^M}
 \int_{ \bar{\epsilon}_0 }^{{\frac{\bar{\epsilon}_0 (1+\bar{\epsilon}_s) }{  (1-\bar{\epsilon}_0 \bar{\epsilon}_s ) }} }b_2(y) ^{M-p}b_1(y)^p \left( \sum^{M-p}_{i=0}(-1)^i{M-p\choose i}( \ln b_2(y))^{M-p-i} (\ln P)^i\right) \\ \label{dominant1} &\times \left(\sum^{p}_{j=0}{p\choose j}(-1)^j ( \ln  b_1(y))^{p-j}(\ln P)^j \right)    dy . 
\end{align}
At high SNR, the term with $(\ln P)^M$ is dominant, compared to the terms with $(\ln P)^m$, $m<M$, which means that  \eqref{dominant1}  can be further approximated as follows: 
   \begin{align} \nonumber
\tilde{Q}_2\approx  &\frac{(\ln P)^M }{P^M}\sum^{M}_{p=0} \underset{i+j=M}{\sum}(-1)^{p+i+j}{M-p\choose i}
 {p\choose j} \\    &\times  
\int_{ \bar{\epsilon}_0 }^{{\frac{\bar{\epsilon}_0 (1+\bar{\epsilon}_s) }{  (1-\bar{\epsilon}_0 \bar{\epsilon}_s ) }} }b_2(y) ^{M-p}b_1(y)^p( \ln  b_1(y))^{p-j} \\\nonumber &\times ( \ln b_2(y))^{M-p-i}    dy  =\mathcal{O}\left( \frac{1}{P^{M}\ln^{-M}P}\right).
\end{align}

Therefore, with $P\rightarrow \infty$, $T_0$ can be approximated as follows:
 \begin{align} \label{t00}
T_0 =\frac{\lambda_0}{P}\tilde{Q}_1+\frac{\lambda_0}{P}\tilde{Q}_2 =\mathcal{O}\left( \frac{1}{P^{M+1}\ln^{-M}P}\right). 
\end{align}

\subsection{Asymptotic Study of $T_m$, $1\leq m \leq M$}
This section is to   focus on    $T_m$, $1\leq m \leq M-1$, which can be expressed as follows:
\begin{align}\nonumber
T_m = &{\rm P}\left( {\rm R}_{m} ^{WP,2}<R_s,  {\rm R}_{M} ^{WP,1}<R_s , |\mathcal{S}_2|=m\right)  
\\\nonumber
= &{\rm P}\left( \gamma_m<\frac{\bar{\epsilon}_s}{\eta P\bar{\alpha}},\gamma_{m}<\tau(h_0),  \gamma_{m+1}>\tau(h_0), \gamma_M<\frac{\bar{\epsilon}_s(P|h_0|^2+1)}{\eta P \bar{\alpha}} \right) . 
\end{align}
 
 For the case of $1\leq m\leq M$, $\tau(h_0)\neq 0$, which means 
\begin{align}\label{low bound h2}
 \frac{|h_0|^2}{\bar{\epsilon}_0\eta \bar{\alpha}} -\frac{1}{\eta P\bar{\alpha}} >0,   
 \end{align}
 or equivalently $|h_0|^2 > \frac{\bar{\epsilon}_0 }{  P }  $. Furthermore, the requirement  $\tau(h_0)<\frac{\bar{\epsilon}_s(P|h_0|^2+1)}{\eta P \bar{\alpha}} $ leads to the constraint $ |h_0|^2    <    \frac{\bar{\epsilon}_0 (1+\bar{\epsilon}_s) }{  P(1-\bar{\epsilon}_0 \bar{\epsilon}_s ) }$, as discussed in \eqref{upper bound h}.

 Therefore, $T_m$ can be rewritten as follows:
\begin{align}\nonumber
T_m =& {\rm P}\left( \gamma_m<\frac{\bar{\epsilon}_s}{\eta P\bar{\alpha}}   , \gamma_M<\frac{\bar{\epsilon}_s(P|h_0|^2+1)}{\eta P \bar{\alpha}} , \right.
\\\nonumber &\left.\gamma_{m}<\frac{|h_0|^2}{\bar{\epsilon}_0\eta \bar{\alpha}} -\frac{1}{\eta P\bar{\alpha}}  , \gamma_{m+1}>\frac{|h_0|^2}{\bar{\epsilon}_0\eta \bar{\alpha}} -\frac{1}{\eta P\bar{\alpha}}  \right) 
\\\label{tm1}
=&{\rm P}\left( \gamma_m< b_{h_0}  ,  \gamma_{m+1}> \tau(h_0) , \gamma_M<a(h_0) \right) ,
\end{align}
where $a(h_0)=\frac{\bar{\epsilon}_s(P|h_0|^2+1)}{\eta P \bar{\alpha}} $ and $ b_{h_0}  = \min\left\{ \frac{\bar{\epsilon}_s}{\eta P\bar{\alpha}} , \tau(h_0)\right\}$.

As can be observed from \eqref{tm1}, $T_m$, $1\leq m \leq M-1$, is a function of three order statistics, $\gamma_m$, $\gamma_{m+1}$, and $\gamma_{M}$.  Recall that the joint pdf of three order statistics is given by \cite{Arnoldbook}
\begin{align}\label{joint pdf three}
&f_{\gamma_m,\gamma_{m+1},\gamma_{M}}(x,y,z) = c_m F_{\gamma}(x)^{m-1}\\\nonumber  &\times \left(F_{\gamma}(z) - F_{\gamma}(y)\right)^{M-m-2} f_{\gamma}(x)f_{\gamma}(y)f_{\gamma}(z),%
\end{align}
where $c_m=\frac{M!}{(m-1)!(M-m-2)!}$.

Denote $T_{m|h_0}$ by the value of  $T_m$ by assuming  that          $h_0$ is fixed.  By using the joint pdf in \eqref{joint pdf three},   $T_{m|h_0} $ can be expressed as follows:
\begin{align} 
T_{m|h_0} =&  {\rm P}\left( \gamma_m< b_{h_0}  ,  \gamma_{m+1}> \tau(h_0) , \gamma_M<a(h_0) \right) 
\\\nonumber =&  c_m \int^{b_{h_0} }_{0}F_{\gamma}(x)^{m-1} f_{\gamma}(x)dx \int^{a(h_0)}_{ \tau(h_0) }f_{\gamma}(y) \int^{a(h_0)}_{y}\left(F_{\gamma}(z) - F_{\gamma}(y)\right)^{M-m-2} f_{\gamma}(z)dz.
\end{align}
 
By using the property of CDFs, $T_{m|h_0} $ can be more explicitly expressed as follows: 
\begin{align} \nonumber
T_{m|h_0} =&    \bar{ c}_m F_{\gamma}(b_{h_0} )^{m} \int^{a(h_0)}_{ \tau(h_0) }\left[\left(F_{\gamma}(a(h_0)) - F_{\gamma}(y)\right)^{M-m-1}  -\left(F_{\gamma}(y) - F_{\gamma}(y)\right)^{M-m-1}  \right]f_{\gamma}(y)dy
\\ 
=&    \bar{ c}_m F_{\gamma}(b_{h_0} )^{m} \int^{a(h_0)}_{ \tau(h_0) }   \left[F_{\gamma}(a(h_0)) - F_{\gamma}(y)\right]^{M-m-1}   f_{\gamma}(y) dy,
\end{align}
where $\bar{c}_m=\frac{M!}{m!(M-m-1)!}$. The expression of  $T_{m|h_0} $ can be further simplified as follows:
\begin{align} \nonumber
T_{m|h_0} =&    \tilde{ c}_m F_{\gamma}(b_{h_0} )^{m}  \left(   \left[F_{\gamma}(a(h_0)) - F_{\gamma}(\tau(h_0))\right]^{M-m} \right.\\\nonumber
&\left. - \left[F_{\gamma}(a(h_0)) - F_{\gamma}(a(h_0))\right]^{M-m}  \right)
\\ \label{m general}
=&    \tilde{ c}_m F_{\gamma}(b_{h_0} )^{m}   \left[F_{\gamma}(a(h_0)) - F_{\gamma}(\tau(h_0))\right]^{M-m}     ,
\end{align}
where  $\tilde{c}_m=\frac{M!}{m!(M-m)!}$. 

$T_m $ can be obtained by calculating the expectation of $T_{m|h_0} $ with   respect of $|h_0|^2$ as follows: 
 \begin{align}
T_m   
= &  \mathcal{E}_{h_0}\left\{  T_{m|h_0} \right\}
\\\nonumber
=&  \tilde{ c}_m   \mathcal{E}_{h_0}\left\{  F_{\gamma}(b_{h_0} )^{m}   \left[F_{\gamma}(a(h_0)) - F_{\gamma}(\tau(h_0))\right]^{M-m}   \right\}.
\end{align}

Recall that $b_{h_0}=\tau(h_0)$ if the constraint $\frac{\bar{\epsilon}_s}{\eta P\bar{\alpha}}   >\frac{|h_0|^2}{\bar{\epsilon}_0\eta \bar{\alpha}} -\frac{1}{\eta P\bar{\alpha}}$ is satisfied, which imposes  the following constraint on $ |h_0|^2$: 
\begin{align}\label{cons3}
   |h_0|^2 <\frac{\bar{\epsilon}_0 (1+\bar{\epsilon}_s)}{  P }    . 
\end{align}
Therefore, $T_m  $ can be more explicitly expressed as follows:
 \begin{align}
T_m   \label{integral tm}
= &    \tilde{ c}_m \lambda_0 \int^{\frac{\bar{\epsilon}_0 (1+\bar{\epsilon}_s)}{  P }  }_{\frac{\bar{\epsilon}_0 }{  P }  }     \left(   \left[F_{\gamma}(a(x)) - F_{\gamma}(\tau(x))\right]^{M-m}    \right) \\\nonumber &\times F_{\gamma}(\tau(x) )^{m}e^{-\lambda_0 x}dx +\tilde{ c}_m\lambda_0F_{\gamma}\left(\frac{\bar{\epsilon}_s}{\eta P\bar{\alpha}} \right)^{m}   \\\nonumber &\times \int_{\frac{\bar{\epsilon}_0 (1+\bar{\epsilon}_s)}{  P }  }^{\frac{\bar{\epsilon}_0 (1+\bar{\epsilon}_s) }{  P(1-\bar{\epsilon}_0 \bar{\epsilon}_s ) }}    \left(   \left[F_{\gamma}(a(x)) - F_{\gamma}(\tau(x))\right]^{M-m}    \right) e^{-\lambda_0 x}dx,
\end{align}
where the constraints on $|h_0|^2$ shown in  \eqref{upper bound h}, \eqref{low bound h2}  and \eqref{cons3} have been used.

We note that for the integrals considered in \eqref{integral tm},  $ \tau(x) \rightarrow 0$ for $P\rightarrow \infty$, which can be explained in the following. Recall that  
\begin{align}
 \tau(x) = \frac{x}{\bar{\epsilon}_0\eta \bar{\alpha}} -\frac{1}{\eta P\bar{\alpha}}  .
\end{align}
For the integrals considered in \eqref{integral tm}, ${\frac{\bar{\epsilon}_0 }{  P }  } \leq x\leq {\frac{\bar{\epsilon}_0 (1+\bar{\epsilon}_s)}{  P }  } $ and ${\frac{\bar{\epsilon}_0 (1+\bar{\epsilon}_s)}{  P }  }\leq x\leq {\frac{\bar{\epsilon}_0 (1+\bar{\epsilon}_s) }{  P(1-\bar{\epsilon}_0 \bar{\epsilon}_s ) }} $.  Therefore, indeed
  $x\rightarrow 0$ for $P\rightarrow \infty$, which means that $ \tau(x) \rightarrow 0$. Similarly, for the integrals considered in \eqref{integral tm}, the following approximation also holds 
  \begin{align}
a(x)= \frac{\bar{\epsilon}_sx}{\eta  \bar{\alpha}}+\frac{\bar{\epsilon}_s }{\eta P \bar{\alpha}}\underset{P\rightarrow\infty}{\longrightarrow} 0.
\end{align}
 
 By using these asymptotic behaviours of $\tau(x)$ and $a(x)$,  the probability $T_m   $ can be approximated as follows:
\begin{align}\nonumber
T_m   
\approx &    \tilde{ c}_m \lambda_0 \int^{\frac{\bar{\epsilon}_0 (1+\bar{\epsilon}_s)}{  P }  }_{\frac{\bar{\epsilon}_0 }{  P }  }    \left[- \lambda_h\lambda_g \tau(x)  \ln (\lambda_h\lambda_g \tau(x) ) \right]^{m}\left[  \lambda_h\lambda_g \tau(x)\right.  \\\nonumber &\times \left. \ln (\lambda_h\lambda_g \tau(x)) - \lambda_h\lambda_g a(x) \ln (\lambda_h\lambda_g a(x))  \right]^{M-m}     dx
\\\nonumber
&+\tilde{ c}_m\lambda_0\left[ - \lambda_h\lambda_g \frac{\bar{\epsilon}_s}{\eta P\bar{\alpha}} \ln \left(\lambda_h\lambda_g \frac{\bar{\epsilon}_s}{\eta P\bar{\alpha}}\right)\right]  ^{m}   \\\nonumber
&\times \int_{\frac{\bar{\epsilon}_0 (1+\bar{\epsilon}_s)}{  P }  }^{\frac{\bar{\epsilon}_0 (1+\bar{\epsilon}_s) }{  P(1-\bar{\epsilon}_0 \bar{\epsilon}_s ) }}    \left[  \lambda_h\lambda_g \tau(x) \ln (\lambda_h\lambda_g \tau(x))  - \lambda_h\lambda_g a(x) \ln (\lambda_h\lambda_g a(x))  \right]^{M-m} dx,
\end{align}
for $P\rightarrow \infty$. 

Define $
 \bar{\tau}(h_0) = P\lambda_h\lambda_g\tau(h_0) $ and $
\bar{a}(h_0)=  P\lambda_h\lambda_g a(h_0)$.  
Therefore, $T_m   $ can be expressed as follows: 
\begin{align}
T_m   
\approx &    \tilde{ c}_m\lambda_0  \int^{\frac{\bar{\epsilon}_0 (1+\bar{\epsilon}_s)}{  P }  }_{\frac{\bar{\epsilon}_0 }{  P }  }    \left[-  \frac{ \bar{\tau}(x)}{P}  \ln \left( \frac{\bar{ \tau}(x)}{P} \right) \right]^{m}  \\\nonumber &\times \left[  \frac{\bar{ \tau}(x)}{P} \ln \left(\frac{\bar{ \tau}(x)}{P}\right) - \frac{\bar{ a}(x)}{P} \ln \left(\frac{\bar{ a}(x)}{P}\right)  \right]^{M-m}     dx
\\\nonumber
&+\tilde{ c}_m\lambda_0\left[ -  \frac{\lambda_h\lambda_g\bar{\epsilon}_s}{\eta P\bar{\alpha}} \ln \left(\frac{\lambda_h\lambda_g \bar{\epsilon}_s}{\eta P\bar{\alpha}}\right)\right]  ^{m} \int_{\frac{\bar{\epsilon}_0 (1+\bar{\epsilon}_s)}{  P }  }^{\frac{\bar{\epsilon}_0 (1+\bar{\epsilon}_s) }{  P(1-\bar{\epsilon}_0 \bar{\epsilon}_s ) }}   \\\nonumber
&\times    \left[ \frac{\bar{ \tau}(x)}{P} \ln \left( \frac{ \bar{\tau}(x)}{P}\right) - \frac{\bar{ a}(x)}{P} \ln \left( \frac{\bar{ a}(x)}{P}\right)  \right]^{M-m} dx.
\end{align}

In order to obtain a more insightful asymptotic expression, we substitute    the following three parameters,  $y=Px$,
\begin{align}
 \tilde{\tau}(y) = \lambda_h\lambda_g\left(\frac{y}{\bar{\epsilon}_0\eta \bar{\alpha}} -\frac{1}{\eta \bar{\alpha}} \right) ,
\end{align}
and
\begin{align}
\tilde{a}(y)= \lambda_h\lambda_g\left(\frac{\bar{\epsilon}_sy}{\eta  \bar{\alpha}}+\frac{\bar{\epsilon}_s }{\eta  \bar{\alpha}}\right),
\end{align}
into the expression of $T_m$, which yields the following expression: 
\begin{align}
T_m   
\approx &    \frac{\tilde{ c}_m \lambda_0}{P} \int^{ \bar{\epsilon}_0 (1+\bar{\epsilon}_s)   }_{ \bar{\epsilon}_0    }    \left[-  \frac{ \tilde{\tau}(y)}{P}  \ln \left( \frac{\tilde{ \tau}(y)}{P} \right) \right]^{m}  \\\nonumber &\times \left[  \frac{\tilde{ \tau}(y)}{P} \ln \left(\frac{\tilde{ \tau}(y)}{P}\right) - \frac{\tilde{ a}(y)}{P} \ln \left(\frac{\tilde{ a}(y)}{P}\right)  \right]^{M-m}     dy
\\\nonumber
&+\frac{\tilde{ c}_m\lambda_0}{P}\left[ -  \frac{\lambda_h\lambda_g\bar{\epsilon}_s}{\eta P\bar{\alpha}} \ln \left(\frac{\lambda_h\lambda_g \bar{\epsilon}_s}{\eta P\bar{\alpha}}\right)\right]  ^{m}  \int_{\bar{\epsilon}_0 (1+\bar{\epsilon}_s)  }^{\frac{\bar{\epsilon}_0 (1+\bar{\epsilon}_s) }{  (1-\bar{\epsilon}_0 \bar{\epsilon}_s ) }}    \left[ \frac{\tilde{ \tau}(y)}{P} \ln \left( \frac{ \tilde{\tau}(y)}{P}\right) - \frac{\tilde{ a}(y)}{P} \ln \left( \frac{\tilde{ a}(y)}{P}\right)  \right]^{M-m} dy. 
\end{align}
It is important to point out that   both $ \tilde{\tau}(y) $ and $\tilde{a}(y)$ are constant and not functions of $P$.
 By using the steps similar to those to obtain the approximation of $T_0$,      $T_m $ can be approximated   as follows:  \begin{align} \label{tm xx}
T_m   
= &   \mathcal{O}\left( \frac{1}{P^{M+1}\ln^{-M}P}\right) .
\end{align}

\subsection{Asymptotic Study of $T_M$}
For the special case $T_M$, we first recall that $T_M$ can be expressed as follows:
\begin{align}\nonumber
T_M = &{\rm P}\left( {\rm R}_{M} ^{WP,2}<R_s,  {\rm R}_{M} ^{WP,1}<R_s , |\mathcal{S}_2|=M\right)  
\\ \label{TM x}
= &{\rm P}\left( \gamma_M<\frac{\bar{\epsilon}_s}{\eta P\bar{\alpha}},   \gamma_{M}<\tau(h_0) \right) .
\end{align}

By using the marginal pdf of the largest order statistics, $T_M$ can be   be straightforwardly  expressed as follows:
\begin{align} \label{TM re}
T_M = & \mathcal{E}_{h_0}\left\{ F_{\gamma}\left(\min\left\{\frac{\bar{\epsilon}_s}{\eta P\bar{\alpha}},\tau(h_0)\right\}\right)^{M} \right\}.
\end{align}
As can be observed from \eqref{TM x}, $T_M$ is a function of $\gamma_M$ only, which is different from $T_m$, $1\leq m\leq M-1$. It is important to point out that the constraint of $|h_0|^2$ shown in \eqref{upper bound h} does not exist for  $T_M$. This causes the reduction of    the diversity gain from $M+1$ to $M$, as shown in the following.   $T_M  $ can be more explicitly expressed as follows:
 \begin{align}
T_M   
= &   \underset{T_{M,1}}{\underbrace{ \tilde{ c}_m \lambda_0 \int^{\frac{\bar{\epsilon}_0 (1+\bar{\epsilon}_s)}{  P }  }_{\frac{\bar{\epsilon}_0 }{  P }  }       F_{\gamma}(\tau(x) )^{M}e^{-\lambda_0 x}dx}}  \\ \label{eq52} &+\tilde{ c}_m\lambda_0
 \underset{T_{M,2}}{\underbrace{ F_{\gamma}\left(\frac{\bar{\epsilon}_s}{\eta P\bar{\alpha}} \right)^{M}  }}  \underset{T_{M,3}}{\underbrace{ \int_{\frac{\bar{\epsilon}_0 (1+\bar{\epsilon}_s)}{  P }  }^{\infty}   e^{-\lambda_0 x}dx}}.
\end{align}
 By following steps similar to those to analyze $T_m$, $1\leq m\leq M-1$, it is straightforward to show that   $T_{M,1}   
=\mathcal{O}\left(     \frac{1 }{P^{M+1} \ln^{-M}P} \right)$ and $T_{M,2}   
=\mathcal{O}\left(      \frac{1 }{P^{M} \ln^{-(M-1)}P} \right)$.

What makes the high SNR behaviour of $T_M$ different from those of $T_m$, $0\leq m\leq M-1$, is $T_{M,3}$.  It is important to point out that  the upper end of the integral range of $T_{M,3}$ is  $\infty$, instead of a value which goes to zero for $P\rightarrow \infty$. As a result, $\lambda_0T_{M,3}=   e^{-\lambda_0 \frac{\bar{\epsilon}_0 (1+\bar{\epsilon}_s)}{  P }  }\underset{P\rightarrow \infty}{\longrightarrow} 1$, instead of $\frac{1}{P}$. Therefore, $T_M$ can be approximated at high SNR as follows:
 \begin{align} \label{tm xx2}
T_M  
= &  \mathcal{O}\left(  \frac{1 }{P^{M} \ln^{-(M-1)}P}\right)  . 
\end{align}

\subsection{Overall High-SNR Approximation}
By substituting  \eqref{t00}, \eqref{tm xx} and \eqref{tm xx2} in \eqref{wp outage}, we can conclude that the overall outage probability can be approximated as follows:
\begin{align} \label{ovx}
{\rm P}^{WP}  = &  \mathcal{O}\left(  \frac{1 }{P^{M} \ln^{-(M-1)}P} \right),
\end{align}
for $P\rightarrow \infty$.  \eqref{ovx} indicates that $T_M$ is the most dominant term   in \eqref{wp outage} at high SNR. 

The diversity gain achieved by WPT-NOMA can be obtained as follows: 
\begin{align}
d=& \underset{P\rightarrow \infty}{\lim}-\frac{\log {\rm P}^{WP} }{\log P} =\underset{P\rightarrow \infty}{\lim}\frac{\log \left(P^{M} \ln^{-(M-1)}P\right)}{\log P}
\\\nonumber
=&\underset{P\rightarrow \infty}{\lim}\left[ \frac{\log  P^{M} }{\log P}-
\frac{\log  \ln^{M-1}P }{\log P}\right].
\end{align}
The following limit holds at high SNR
\begin{align}
 \underset{P\rightarrow \infty}{\lim} 
\frac{\log  \ln^{M-1}P }{\log P}=& \underset{P\rightarrow \infty}{\lim} 
\frac{\log e \ln\left(  \ln^{M-1}P \right)}{\log e \ln P}= \underset{P\rightarrow \infty}{\lim}  \frac{M-1}{\ln P} =0,
\end{align}
where L'Hospital's rule is used. Therefore, the diversity gain achieved by WPT-NOMA can be obtained as follows:
\begin{align}
d=&\underset{P\rightarrow \infty}{\lim}  \frac{\log  P^{M} }{\log P} =M,
\end{align}
and the theorem is proved.

\section{Proof for Lemma \ref{lemma1}} \label{lemma1proof}

In order to study the asymptotic behaviour of  $ {\rm P}\left( E_0\right) $, EVT is applied in the following. Recall that the limiting CDF of the smallest order statistics should follow one of the three distributions,  namely the Frech\'et type, the modified Weibull type and the extreme value CDF \cite[Theorem 8.3.5]{Arnoldbook}. For the considered order statistics, $\gamma_1$,  the modified Weibull type is   applicable  as explained in the following.  

Denote $F^{-1}_{\gamma}(a)$ by the inverse function of the CDF of the unordered channel gain, i.e., $F_{\gamma}\left(F^{-1}_{\gamma}(a)\right)=a$.  The first condition to show that the considered CDF is the modified Weibull type of EVT is that  $F^{-1}_{\gamma}(0)$ should be finite \cite[Theorem 8.3.6]{Arnoldbook}.  For the considered CDF, we have $F^{-1}_{\gamma}(0)=0$ which is indeed finite. The second condition   is to   show whether the following limitation exists  
\begin{align}
\underset{\epsilon\rightarrow 0^+}{\lim}\frac{F_{\gamma}\left(F^{-1}_{\gamma}(0)+\epsilon x\right)}{F_{\gamma}\left(F^{-1}_{\gamma}(0)+\epsilon \right)}=x^{\breve{\alpha}},
\end{align}
for all $x>0$, where $\breve{\alpha}$ denotes  a constant parameter. 

For the considered CDF, the limitation can be expressed as follows:
\begin{align}\label{evt1}
&\underset{\epsilon\rightarrow 0^+}{\lim}\frac{F_{\gamma}\left(F^{-1}_{\gamma}(0)+\epsilon x\right)}{F_{\gamma}\left(F^{-1}_{\gamma}(0)+\epsilon \right)} 
\\\nonumber =&\underset{\epsilon\rightarrow 0^+}{\lim}\frac{F_{\gamma}\left( \epsilon x\right)}{F_{\gamma}\left( \epsilon \right)} = \underset{\epsilon\rightarrow 0^+}{\lim}\frac{1 - 2 \sqrt{ \lambda_h\lambda_g \epsilon x}K_1\left(2 \sqrt{ \lambda_h\lambda_g \epsilon x}\right) }{1 - 2 \sqrt{ \lambda_h\lambda_g \epsilon}K_1\left(2 \sqrt{ \lambda_h\lambda_g \epsilon}\right) } .
\end{align}
 
Note that in \eqref{evt1}, $x$ is constant, and the limitation is with respect to $\epsilon$. When $\epsilon\rightarrow 0$, the approximation in \eqref{bessel approximation} can be applied and the limitation can be obtained as follows:
\begin{align}\label{evt2}
&\underset{\epsilon\rightarrow 0^+}{\lim}\frac{F_{\gamma}\left(F^{-1}_{\gamma}(0)+\epsilon x\right)}{F_{\gamma}\left(F^{-1}_{\gamma}(0)+\epsilon \right)} 
=  \underset{\epsilon\rightarrow 0^+}{\lim}\frac{- \lambda_h\lambda_g \epsilon x \ln (\lambda_h\lambda_g \epsilon x) }{- \lambda_h\lambda_g \epsilon \ln (\lambda_h\lambda_g \epsilon) } =\underset{\epsilon\rightarrow 0^+}{\lim}\frac{ x \ln (\lambda_h\lambda_g \epsilon x) }{ \ln (\lambda_h\lambda_g \epsilon) } .
\end{align}
By applying  L'Hospital's rule, the limitation can be obtained as follows:
\begin{align}\label{evt3}
&\underset{\epsilon\rightarrow 0^+}{\lim}\frac{F_{\gamma}\left(F^{-1}_{\gamma}(0)+\epsilon x\right)}{F_{\gamma}\left(F^{-1}_{\gamma}(0)+\epsilon \right)} 
=\underset{\epsilon\rightarrow 0^+}{\lim}\frac{ x \frac{\lambda_h\lambda_g  x}{\lambda_h\lambda_g \epsilon x} }{ \frac{\lambda_h\lambda_g }{\lambda_h\lambda_g \epsilon} } =x,
\end{align}
which means that $\breve{\alpha}=1$ for the considered order statistics.

As a result, the smallest channel gain will follow the modified Weibull type with $\breve{\alpha}=1$, i.e.,
\begin{align}\label{ab}
\frac{\gamma_1-a_m}{b_m} \sim G_2^*(x;\breve{\alpha}) ,
\end{align}
where $G_2^*(x;\breve{\alpha}) $ denotes the modified Weibull distribution:
\begin{align}
G_2^*(x;\breve{\alpha}) \triangleq  1-G_2(-x;\breve{\alpha}) = 1- e^{-x},
\end{align}
and $G_2(x;\breve{\alpha})$ denotes the Weibull  distribution defined as follows:
\begin{align}
G_2(x;\breve{\alpha}) \triangleq  \left\{\begin{array}{ll}e^{-(-x)^{\breve{\alpha}}}, &x<0\\
1,&x\geq 0
\end{array}\right..
\end{align}
The two parameters in \eqref{ab}, $a_m$ and $b_m$ are given by
\begin{align}\label{am}
a_m \triangleq  F^{-1}_{\gamma}(0)=0,
\end{align}
and 
\begin{align}\label{bmxx}
b_m \triangleq F^{-1}_{\gamma}\left(\frac{1}{M}\right) - F^{-1}_{\gamma}(0)=F^{-1}_{\gamma}\left(\frac{1}{M}\right).
\end{align}
The challenging step is to find an explicit expression of $b_m$, which can be obtained   by solving the following equation: 
\begin{align} \label{eqx3}
 1 - 2 \sqrt{ \lambda_h\lambda_g b_m}K_1\left(2 \sqrt{ \lambda_h\lambda_g b_m}\right) = \frac{1}{M} .
 \end{align}
For $M\rightarrow \infty$, we have $\frac{1}{M}\rightarrow 0$ and hence $b_m\rightarrow 0$. Because  $b_m\rightarrow 0$, the use of the approximation in \eqref{bessel approximation} can be used to simplify  the   equation \eqref{eqx3}  as follows:
\begin{align} \label{lambert1}
 - \lambda_h\lambda_g b_m \ln (\lambda_h\lambda_g b_m)  =\frac{1}{M}.
\end{align}
In order to apply the Lambert W function,   \eqref{lambert1} needs to be written as follows:
\begin{align} \label{lambert2}
    - \frac{1}{M}  &=-\frac{1}{M \lambda_h\lambda_g b_m}e^{-\frac{1}{M\lambda_h\lambda_g b_m}},
 \end{align}
 which means that the solution of \eqref{lambert2} can be expressed as follows: 
 \begin{align} \label{lambert3}
  -\frac{1}{M \lambda_h\lambda_g b_m} &=W\left(  - \frac{1}{M} \right),
 \end{align}
 or equivalently
  \begin{align} \label{lambert4}
 b_m &=-\frac{1}{ M \lambda_h\lambda_gW\left(  - \frac{1}{M} \right)},
 \end{align}
 where $W(\cdot)$ denotes the  Lambert W function.
 
 Because $- \frac{1}{M} $ is negative, there are two solutions for $W\left(  - \frac{1}{M} \right)$, namely $W_0\left(  - \frac{1}{M} \right)$ and $W_{-1}\left(  - \frac{1}{M} \right)$ \cite{6559999}. Recall that $W_0(x)\rightarrow 0$ for $x\rightarrow 0$, which means that $ b_m =-\frac{1}{ M \lambda_h\lambda_gW_0\left(  - \frac{1}{M} \right)}\rightarrow \infty$ for  $M\rightarrow \infty$. This is contradicted to  \eqref{bmxx} which indicates that $b_m\rightarrow 0$ for $M\rightarrow \infty$. Therefore, $W_0\left(  - \frac{1}{M} \right)$  is not the solution of the considered case, and we are interested the other branch, $W_{-1}\left(  - \frac{1}{M} \right)$. Recall that $W_{-1}\left( x\right)$ can be bounded as follows: \cite{6559999}
 \begin{align}
 -1-\sqrt{2u}-u< W_{-1}\left( -e^{-u-1}\right)<-1-\sqrt{2u} -\frac{2}{3}u,\text{ for } u>0.
 \end{align} 
 
By applying the bounds,  $W_{-1}\left(  - \frac{1}{M} \right)$ can be bounded as follows:
  \begin{align}
 \ln \frac{1}{M}< W_{-1}\left(  - \frac{1}{M} \right)< \frac{2}{3}\ln \frac{1}{M},
 \end{align}
 which yields the following approximation:
  \begin{align}
 W_{-1}\left(  - \frac{1}{M} \right)=- \mathcal{O}( \ln M).
 \end{align}
 Therefore, $b_m$ can be approximated as follows:
   \begin{align} \label{lambert5}
 b_m &=\frac{1}{  \lambda_h\lambda_g M \mathcal{O}( \ln M)}.
 \end{align}
 
By applying \eqref{am} and \eqref{lambert5} to \eqref{ab}, we have $ \frac{\gamma_1 }{b_m}\sim e^{-x} $ and the limiting CDF of the smallest channel gain is given by 
 \begin{align}
  F_{\gamma_1}(y) = 1-e^{ {y}{M \lambda_h\lambda_gW\left(  - \frac{1}{M} \right)}},
 \end{align}
 and the corresponding pdf is given by $  f_{\gamma_1}(y) = {M \lambda_h\lambda_gW\left(  - \frac{1}{M} \right)}e^{ {y}{M \lambda_h\lambda_gW\left(  - \frac{1}{M} \right)}}
$.
 
By using  this pdf, $ {\rm P}\left( E_0\right)  $ can be expressed as follows: 
\begin{align}\nonumber
 {\rm P}\left( E_0\right)  
=& \int^{\infty}_{0}\left(e^{-\lambda_0\epsilon_0P^{-1}  }- e^{ -\lambda_0(  \beta^2 \epsilon_0x+  \epsilon_0 P^{-1})} \right)f_{\gamma_1}(x)dx   +1 - e^{-\lambda_0\epsilon_0P^{-1} }
\\\nonumber   
 \approx &1 +  \frac{M \lambda_h\lambda_gW\left(  - \frac{1}{M} \right)  }{\lambda_0   \beta^2 \epsilon_0- M \lambda_h\lambda_gW\left(  - \frac{1}{M} \right)}  ,
\end{align}
which can be approximated as follows:
\begin{align}\nonumber
 {\rm P}\left( E_0\right)  
 \approx &1 -  \frac{M \lambda_h\lambda_g \mathcal{O}( \ln M) }{\lambda_0   \beta^2 \epsilon_0+ M \lambda_h\lambda_g \mathcal{O}( \ln M)}
    \\\label{final ap}  
 \approx &1 -  \frac{1 }{1+\frac{\lambda_0   \beta^2 \epsilon_0}{ M \lambda_h\lambda_g \mathcal{O}( \ln M)}} \rightarrow \frac{\lambda_0   \beta^2 \epsilon_0}{  \lambda_h\lambda_g M \mathcal{O}( \ln M)},
\end{align}
where the last approximation follows from the fact that $\frac{1}{1+x}\approx 1-x$ for $x\rightarrow 0$. 
By increasing $M$, \eqref{final ap} clearly shows that $ {\rm P}\left( E_0\right)  $ approaches zero, and the proof for the lemma is complete.  

     \bibliographystyle{IEEEtran}
\bibliography{IEEEfull,trasfer}

\begin{thebibliography}{10}
\providecommand{\url}[1]{#1}
\csname url@samestyle\endcsname
\providecommand{\newblock}{\relax}
\providecommand{\bibinfo}[2]{#2}
\providecommand{\BIBentrySTDinterwordspacing}{\spaceskip=0pt\relax}
\providecommand{\BIBentryALTinterwordstretchfactor}{4}
\providecommand{\BIBentryALTinterwordspacing}{\spaceskip=\fontdimen2\font plus
\BIBentryALTinterwordstretchfactor\fontdimen3\font minus
  \fontdimen4\font\relax}
\providecommand{\BIBforeignlanguage}[2]{{%
\expandafter\ifx\csname l@#1\endcsname\relax
\typeout{** WARNING: IEEEtran.bst: No hyphenation pattern has been}%
\typeout{** loaded for the language `#1'. Using the pattern for}%
\typeout{** the default language instead.}%
\else
\language=\csname l@#1\endcsname
\fi
#2}}
\providecommand{\BIBdecl}{\relax}
\BIBdecl

\bibitem{ngiot}
``Roadmap for {IoT} research, innovation and development in europe,'' EU NGIoT,
  Jan. 2020.

\bibitem{jsacnomaxmine}
Z.~Ding, X.~Lei, G.~K. Karagiannidis, R.~Schober, J.~Yuan, and V.~Bhargava, ``A
  survey on non-orthogonal multiple access for {5G} networks: Research
  challenges and future trends,'' \emph{IEEE J. Sel. Areas Commun.}, vol.~35,
  no.~10, pp. 2181--2195, Oct. 2017.

\bibitem{6666156}
H.~Nikopour and H.~Baligh, ``Sparse code multiple access,'' in \emph{Proc. IEEE
  Int. Symp. on Personal Indoor and Mobile Radio Commun.}, London, UK, Sept.
  2013.

\bibitem{NOMAPIMRC}
Y.~Saito, A.~Benjebbour, Y.~Kishiyama, and T.~Nakamura, ``System level
  performance evaluation of downlink non-orthogonal multiple access {(NOMA)},''
  in \emph{Proc. IEEE Int. Symposium on Personal, Indoor and Mobile Radio
  Commun.}, London, UK, Sept. 2013.

\bibitem{8662677}
Z.~{Ding}, R.~{Schober}, P.~{Fan}, and H.~V. {Poor}, ``Simple semi-grant-free
  transmission strategies assisted by non-orthogonal multiple access,''
  \emph{IEEE Trans. Commun.}, vol.~67, no.~6, pp. 4464--4478, Jun. 2019.

\bibitem{8986647}
J.~{Zhang}, X.~{Tao}, H.~{Wu}, N.~{Zhang}, and X.~{Zhang}, ``Deep reinforcement
  learning for throughput improvement of uplink grant-free {NOMA} system,''
  \emph{IEEE Internet of Things Journal}, pp. 1--1, (to appear in 2020).

\bibitem{8674774}
J.~{Choi}, ``{NOMA}-based compressive random access using gaussian spreading,''
  \emph{IEEE Trans. Commun.}, vol.~67, no.~7, pp. 5167--5177, Jul. 2019.

\bibitem{Ruizhangbroadcast13}
R.~Zhang and C.~K. Ho, ``{MIMO} broadcasting for simultaneous wireless
  information and power transfer,'' \emph{IEEE Trans. Wirel. Commun.}, vol.~12,
  no.~5, pp. 1989--2001, May 2013.

\bibitem{6951347}
X.~{Lu}, P.~{Wang}, D.~{Niyato}, D.~I. {Kim}, and Z.~{Han}, ``Wireless networks
  with rf energy harvesting: A contemporary survey,'' \emph{IEEE Commun.
  Surveys Tuts.}, vol.~17, no.~2, pp. 757--789, 2015.

\bibitem{7081080}
Z.~{Ding}, C.~{Zhong}, D.~{Wing Kwan Ng}, M.~{Peng}, H.~A. {Suraweera},
  R.~{Schober}, and H.~V. {Poor}, ``Application of smart antenna technologies
  in simultaneous wireless information and power transfer,'' \emph{IEEE Commun.
  Mag.}, vol.~53, no.~4, pp. 86--93, 2015.

\bibitem{8758981}
K.~W. {Choi}, L.~{Ginting}, A.~A. {Aziz}, D.~{Setiawan}, J.~H. {Park}, S.~I.
  {Hwang}, D.~S. {Kang}, M.~Y. {Chung}, and D.~I. {Kim}, ``Toward realization
  of long-range wireless-powered sensor networks,'' \emph{IEEE Wireless
  Commun.}, vol.~26, no.~4, pp. 184--192, 2019.

\bibitem{7876867}
K.~{Han} and K.~{Huang}, ``Wirelessly powered backscatter communication
  networks: Modeling, coverage, and capacity,'' \emph{IEEE Trans. Wireless
  Commun.}, vol.~16, no.~4, pp. 2548--2561, 2017.

\bibitem{7551180}
G.~{Wang}, F.~{Gao}, R.~{Fan}, and C.~{Tellambura}, ``Ambient backscatter
  communication systems: Detection and performance analysis,'' \emph{IEEE
  Trans. Commun.}, vol.~64, no.~11, pp. 4836--4846, 2016.

\bibitem{xiangback}
W.~Liu, K.~Huang, X.~Zhou, and S.~Durrani, ``Next generation backscatter
  communication: systems, techniques, and applications,'' \emph{J Wireless Com
  Network}, vol.~69, pp. 1--10, 2019.

\bibitem{8907447}
R.~{Long}, Y.~{Liang}, H.~{Guo}, G.~{Yang}, and R.~{Zhang}, ``Symbiotic radio:
  A new communication paradigm for passive internet of things,'' \emph{IEEE
  Internet of Things Journal}, vol.~7, no.~2, pp. 1350--1363, 2020.

\bibitem{8399824}
W.~{Liu}, Y.~{Liang}, Y.~{Li}, and B.~{Vucetic}, ``Backscatter multiplicative
  multiple-access systems: Fundamental limits and practical design,''
  \emph{IEEE Trans. Wireless Commun.}, vol.~17, no.~9, pp. 5713--5728, 2018.

\bibitem{yuanweijsac}
Y.~Liu, Z.~Ding, M.~Elkashlan, and H.~V. Poor, ``Cooperative non-orthogonal
  multiple access with simultaneous wireless information and power transfer,''
  \emph{IEEE J. Sel. Areas Commun.}, vol.~34, no.~4, pp. 938--953, Apr. 2016.

\bibitem{9007658}
A.~{Agarwal}, A.~K. {Jagannatham}, and L.~{Hanzo}, ``Finite blocklength
  non-orthogonal cooperative communication relying on {SWIPT}-enabled energy
  harvesting relays,'' \emph{IEEE Trans. Commun.}, pp. 1--1, 2020.

\bibitem{7917312}
Z.~{Yang}, Z.~{Ding}, P.~{Fan}, and N.~{Al-Dhahir}, ``The impact of power
  allocation on cooperative non-orthogonal multiple access networks with
  {SWIPT},'' \emph{IEEE Trans. Wireless Commun.}, vol.~16, no.~7, pp.
  4332--4343, 2017.

\bibitem{9104736}
J.~{Tang}, J.~{Luo}, J.~{Ou}, X.~{Zhang}, N.~{Zhao}, D.~K.~C. {So}, and
  K.~{Wong}, ``Decoupling or learning: Joint power splitting and allocation in
  {MC-NOMA} with {SWIPT},'' \emph{IEEE Trans. Commun.}, pp. 1--1, 2020.

\bibitem{9091839}
T.~{Nguyen}, V.~{Nguyen}, D.~B.~D. {Costa}, and B.~{An}, ``Hybrid user pairing
  for spectral and energy efficiencies in multiuser miso-noma networks with
  swipt,'' \emph{IEEE Trans. Commun.}, pp. 1--1, 2020.

\bibitem{8891923}
H.~{Zhang}, M.~{Feng}, K.~{Long}, G.~K. {Karagiannidis}, V.~C.~M. {Leung}, and
  H.~V. {Poor}, ``Energy efficient resource management in {SWIPT} enabled
  heterogeneous networks with {NOMA},'' \emph{IEEE Trans. Wireless Commun.},
  vol.~19, no.~2, pp. 835--845, 2020.

\bibitem{7582543}
P.~D. Diamantoulakis, K.~N. Pappi, Z.~Ding, and G.~K. Karagiannidis,
  ``Wireless-powered communications with non-orthogonal multiple access,''
  \emph{IEEE Trans. Wirel. Commun.}, vol.~15, no.~12, pp. 8422--8436, Dec.
  2016.

\bibitem{8439079}
J.~{Guo}, X.~{Zhou}, S.~{Durrani}, and H.~{Yanikomeroglu}, ``Design of
  non-orthogonal multiple access enhanced backscatter communication,''
  \emph{IEEE Trans. Wireless Commun.}, vol.~17, no.~10, pp. 6837--6852, 2018.

\bibitem{wongwcnc20}
F.~D. Ardakani and V.~W. Wong, ``Joint reflection coefficient selection and
  subcarrier allocation for backscatter systems with {NOMA},'' in
  \emph{Proceeding of IEEE Wireless Commun. and Networking Conf. (WCNC)}, May
  2020, pp. 1--6.

\bibitem{8962090}
Y.~{Liao}, G.~{Yang}, and Y.~{Liang}, ``Resource allocation in {NOMA}-enhanced
  full-duplex symbiotic radio networks,'' \emph{IEEE Access}, vol.~8, pp.
  22\,709--22\,720, 2020.

\bibitem{8636518}
Q.~{Zhang}, L.~{Zhang}, Y.~{Liang}, and P.~{Kam}, ``Backscatter-{NOMA}: A
  symbiotic system of cellular and internet-of-things networks,'' \emph{IEEE
  Access}, vol.~7, pp. 20\,000--20\,013, 2019.

\bibitem{SGFx}
Z.~Ding, R.~Schober, and H.~V. Poor, ``A new {QoS}-guarantee strategy for
  {NOMA} assisted semi-grant-free transmission,'' \emph{IEEE Trans. Wireless
  Commun.}, (submitted) Available on-line at arXiv:2004.12997.

\bibitem{Arnoldbook}
B.~C. Arnold, N.~Balakrishnan, and H.~N. Nagaraja, \emph{A First Course in
  Order Statistics}.\hskip 1em plus 0.5em minus 0.4em\relax Wiley-Blackwell,
  Philadelphia, 1992.

\bibitem{1710338}
{Guocong Song} and {Ye Li}, ``Asymptotic throughput analysis for channel-aware
  scheduling,'' \emph{IEEE Transactions on Communications}, vol.~54, no.~10,
  pp. 1827--1834, 2006.

\bibitem{Zhouzhang13}
X.~Zhou, R.~Zhang, and C.~K. Ho, ``Wireless information and power transfer:
  Architecture design and rate-energy tradeoff,'' \emph{IEEE Trans. Wirel.
  Commun.}, vol.~61, no.~11, pp. 4754--4767, Nov. 2013.

\bibitem{hsic01}
Z.~Ding, R.~Schober, and H.~V. Poor, ``Unveiling the importance of {SIC} in
  {NOMA} systems: {Part I} - state of the art and recent findings,'' \emph{IEEE
  Commun. Lett.}, (submitted) Available on-line at arXiv:2005.10215.

\bibitem{hsic02}
------, ``Unveiling the importance of {SIC} in {NOMA} systems: {Part I} - new
  results and future directions,'' \emph{IEEE Commun. Lett.}, (submitted)
  Available on-line at arXiv:2005.10217.

\bibitem{GRADSHTEYN}
I.~S. Gradshteyn and I.~M. Ryzhik, \emph{Table of Integrals, Series and
  Products}, 6th~ed.\hskip 1em plus 0.5em minus 0.4em\relax New York: Academic
  Press, 2000.

\bibitem{Dingkri04}
Z.~Ding, I.~Krikidis, B.~Sharif, and H.~V. Poor, ``Wireless information and
  power transfer in cooperative networks with spatially random relays,''
  \emph{IEEE Trans. Wirel. Commun.}, vol.~13, no.~8, pp. 4440--4453, Aug. 2014.

\bibitem{6559999}
I.~{Chatzigeorgiou}, ``Bounds on the {Lambert} function and their application
  to the outage analysis of user cooperation,'' \emph{IEEE Commun. Lett.},
  vol.~17, no.~8, pp. 1505--1508, 2013.

\end{thebibliography}
 
   \end{document}